\newtheorem{theorem}{Theorem}
\newtheorem{rem}{Remark}
\newtheorem{lemma}{Lemma}
\newcommand\myeqa{\stackrel{\mathclap{\footnotesize\mbox{(a)}}}{=}}
\newcommand\myeqb{\stackrel{\mathclap{\footnotesize\mbox{(b)}}}{=}}
\newcommand\myeqc{\stackrel{\mathclap{\footnotesize\mbox{(c)}}}{=}}
\newcommand*{\rom}[1]{\expandafter\@slowromancap\romannumeral #1@}
\begin{document}

\title{Autoencoder-Based Error Correction Coding for One-Bit Quantization}
\author{{
    Eren Balevi and
    Jeffrey G. Andrews}\\
\thanks{The authors are with Department of Electrical and Computer Engineering at the University of Texas at Austin, TX, USA.  Email: erenbalevi@utexas.edu, jandrews@ece.utexas.edu.}% This work has been supported by Samsung Electronics and NSF grant CCF-1514275.}				
}

\maketitle 
\normalsize
\begin{abstract}
This paper proposes a novel deep learning-based error correction coding scheme for AWGN channels under the constraint of one-bit quantization in the receivers. Specifically, it is first shown that the optimum error correction code that minimizes the probability of bit error can be obtained by perfectly training a special autoencoder, in which ``perfectly'' refers to converging the global minima. However, perfect training is not possible in most cases. To approach the performance of a perfectly trained autoencoder with a suboptimum training, we propose utilizing turbo codes as an implicit regularization, i.e., using a concatenation of a turbo code and an autoencoder.  It is empirically shown that this design gives nearly the same performance as to the hypothetically perfectly trained autoencoder, and we also provide a theoretical proof of why that is so. The proposed coding method is as bandwidth efficient as the integrated (outer) turbo code, since the autoencoder exploits the excess bandwidth from pulse shaping and packs signals more intelligently thanks to sparsity in neural networks. Our results show that the proposed coding scheme at finite block lengths outperforms conventional turbo codes even for QPSK modulation. Furthermore, the proposed coding method can make one-bit quantization operational even for $16$-QAM.
\end{abstract}

\begin{IEEEkeywords}
Deep learning, error correction coding, one-bit quantization.
\end{IEEEkeywords}

\section{Introduction}
Wireless communication systems are trending towards ever higher carrier frequencies, due to the large bandwidths available \cite{TedInvitedPaper19}.  These high frequencies are made operational by the use of large co-phased antenna arrays to enable directional beamforming.  Digital control of these arrays is highly desirable, but requires a very large number of analog-to-digital converters (ADCs) at the receiver or digital-to-analog converters (DACs) at the transmitter, each of which consumes nontrivial power and implementation area \cite{Walden99}.  Low resolution quantization is thus inevitable to enable digital beamforming in future systems.  However, little is known about optimum communication techniques in a low resolution environment.  

In this paper, we focus on error correction codes for the one-bit quantized channel, where just the sign of the real and imaginary parts is recorded by the receiver ADC.  Conventional coding techniques, which mainly target unquantized additive white Gaussian noise (AWGN) channels or other idealized models, are not well-suited for this problem.  Deep learning is an interesting paradigm for developing channel codes for low-resolution quantization, motivated by its previous success for some other difficult problems, e.g., see \cite{OShea17} for learning transmit constellations, \cite{YeLi18} for joint channel estimation and data detection, \cite{BalAnd19} for one-bit OFDM communication, or \cite{MaoHao18} for several other problems.   This paper develops a novel approach which concatenates a conventional turbo code with a deep neural network -- specifically, an autoencoder -- to approach theoretical benchmarks and achieve compelling error probability performance.

%Designing efficient channel codes for different communication environments is important to ensure ultra-reliability. Unfortunately, most channel codes mainly target idealistic additive white Gaussian noise (AWGN) channels, and do not provide any performance guarantee for the other channels. In particular, the optimum error correction code in the sense of minimum probability of error is not known for many channels. In this paper, deep learning is leveraged to facilitate designing (or learning) efficient channel codes for AWGN channels under the constraint of one-bit quantization in the receiver. Specifically, this paper aims to develop new channel codes by concatenating conventional turbo codes with an autoencoder, unlike the prior methods that modify or replace the channel decoders with a neural network \cite{BruckBlaum89}, \cite{WangWicker96}, \cite{HamalainenHenriksson99}.

\subsection{Related Work and Motivations}
Employing a neural network for decoding linear block codes was proposed in the late eighties \cite{BruckBlaum89}. Similarly, the Viterbi decoder was implemented with a neural network for convolutional codes in the late nineties \cite{WangWicker96}, \cite{HamalainenHenriksson99}. A simple classifier is learned in these studies instead of a decoding algorithm. This leads to a training dataset that must include all codewords, which makes them infeasible for most codes due to the exponential complexity. Recently, it was shown that a decoding algorithm could be learned for structured codes \cite{Gruber17}, however this design still requires a dataset with at least $90\%$ percent of the codebook, which limits its practicality to small block lengths. To learn decoding for large block lengths, \cite{KimViswanath18} trained a recurrent neural network for small block lengths that can generalize well for large block lengths. Furthermore, \cite{Nachmani18} improves the belief propagation algorithm by assigning trainable weights to the Tanner graph for high-density parity check (HDPC) codes that can be learned from a single codeword, which prevents the curse of dimensionality. 

Most of the prior studies are aimed at learning and/or improving the performance of decoding algorithms through the use of a neural network. There are only a few papers that aim to learn an encoder, which is more difficult than learning a decoder due to the difficulties of training the lower layers in deep networks \cite{KimViswanath18b}, \cite{JiangViswanath18}, \cite{Kosaian18}. We specifically design a channel code for the challenging one-bit quantized AWGN channels via an autoencoder to obtain reliable communication at the Shannon rate. The closest study to our paper that we know of is \cite{OShea17}, which proposes an autoencoder to learn transmit constellations such as M-QAM. However, \cite{OShea17} does not aim to achieve a very small error probability (close to the Shannon bound) and quantization is not considered.

\subsection{Contributions}
Our contributions are (i) to show that near-optimum hand-crafted channel codes can be equivalently obtained by perfectly training a special autoencoder, which however is not possible in practice, and (ii) to design a novel and practical autoencoder-based channel coding scheme that is well-suited for receivers with one-bit quantization.

 \textbf{Designing an optimum channel code is equivalent to learning an autoencoder}.  We first show that the mathematical model of a communication system can be represented by a regularized autoencoder, where the regularization comes from the channel and RF modules. Then, it is formally proven that an optimum channel code can be obtained by perfectly training the parameters of the encoder and decoder -- where ``perfectly'' means finding the global minimum of its loss function -- of a specially designed autoencoder architecture. However, autoencoders cannot be perfectly trained, so suboptimum training policies are utilized. This is particularly true for one-bit quantization, which further impedes training due to its zero gradient. Hence, we propose a suboptimum training method and justify its efficiency by theoretically finding the minimum required SNR level that yields almost zero detection error, which could be obtained if the autoencoder parameters would be trained perfectly, and prove the existence of a global minimum. This is needed, because we cannot empirically obtain the performance of a perfectly trained autoencoder due to getting stuck in a local minima. In what follows, observing the SNRs due to suboptimum training and comparing it with the case of perfect training allows us to characterize the efficiency. 

\textbf{Designing a practical coding scheme for one-bit receivers}.  Although one-bit quantization has been extensively studied, e.g., \cite{IvrlacNossek07}, \cite{JacobssonStuder17},  \cite{StuderDurisi16}, \cite{RisiLarsson14}, there is no paper to our knowledge that designs a channel code specifically for one-bit quantization. We fill this gap by developing a novel deep learning-based coding scheme that combines turbo codes with an autoencoder. Specifically, we first suboptimally train an autoencoder, and then integrate a turbo code with this autoencoder, which acts as an implicit regularizer. The proposed coding method is as bandwidth efficient as just using the turbo code, because the autoencoder packs the symbols intelligently by exploiting its sparsity stemming from the use of a rectified linear unit (ReLU) activation function and exploits the pulse shaping filter's excess bandwidth by using the faster-than-Nyquist transmission. It is worth emphasizing that conventional channel codes are designed according to the traditional orthogonal pulses with symbol rate sampling and cannot take the advantage of excess bandwidth. The numerical results show that our method can approach the performance of a perfectly trained autoencoder.  For example, the proposed coding scheme can compensate for the performance loss of QPSK modulation at finite block lengths due to the one-bit ADCs, and significantly improve the error rate in case of $16$ QAM, in which case one-bit quantization does not usually work even with powerful turbo codes. This success is theoretically explained by showing that the autoencoder produces Gaussian distributed data for turbo decoder even if there are some nonlinearities in the transmitters/receivers that result in non-Gaussian noise. 

This paper is organized as follows. The mathematical model of a communication system is introduced as a channel autoencoder in Section \ref{Channel Autoencoders}. Then, the training imperfections are quantified by finding the minimum required SNR level that achieves almost zero detection error for the one-bit quantized channel autoencoder in Section \ref{Quantifying}. The channel code is designed in Section \ref{Code Design}, and its performance is given in Section \ref{Numerical Results}. The paper concludes in Section \ref{Conclusions}.

\section{Channel Autoencoders} \label{Channel Autoencoders}
Autoencoders are a special type of feedforward neural network involving an ``encoder" that transforms the input message to a codeword via hidden layers and a ``decoder" that approximately reconstructs the input message at the output using the codeword.  This does not mean that autoencoders strive to copy the input message to the output. On the contrary, the aim of an autoencoder is to extract lower dimensional features of the inputs by hindering the trivial copying of inputs to outputs. Different types of regularization methods have been proposed for this purpose based on denoising \cite{VincentManzagol08}, sparsity \cite{RanzatoLeCun06}, and contraction \cite{RifaiBengio11}, which are termed \textit{regularized autoencoders}. A special type of regularized autoencoder inherently emerges in communication systems, where the physical channel as well as the RF modules of transmitters and receivers behave like a explicit regularizer. We refer to this structure as a \textit{channel autoencoder}, where channel refers to the type of regularization. 

The mathematical model of a communication system is a natural partner to the structure of a regularized autoencoder, since a communication system has the following ingredients:
\begin{enumerate}
\item A message set $\{1,2,\cdots,M\}$, in which message $i$ is drawn from this set with probability $1/M$
\item An encoder $f: \{1,2,\cdots,M \}\rightarrow X^n$ that yields length-$n$ codewords 
\item A channel $p(y|x)$ that takes an input from alphabet $X$ and outputs a symbol from alphabet $Y$
\item A decoder $g: Y^n\rightarrow \{1,2,\cdots,M \}$ that estimates the original message from the received length-$n$ sequence
\end{enumerate}
In regularized autoencoders, these 4 steps are performed as determining an input message, encoding this message, regularization, and decoding, respectively. To visualize this analogy, the conventional representation of a communication model is portrayed as an autoencoder that performs a classification task in Fig. \ref{fig:model}.
\begin{figure*} [!t] 
\centering 
\includegraphics [width=6.5in]{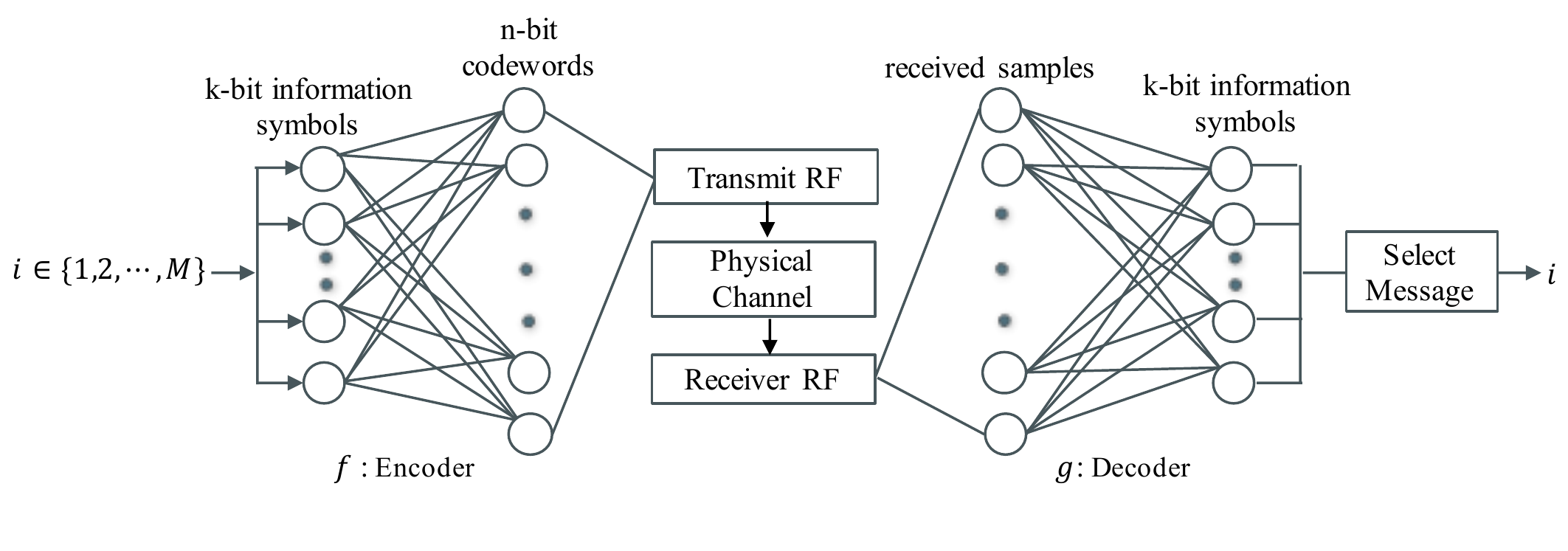}
\caption{Representation of a channel autoencoder: $i^{th}$ message is coded to $k$-bits information sequence, which is then mapped to a length-$n$ codeword via a parameterized encoder and transferred over the channel. The received signal is processed via a parameterized decoder to extract the message $i$.}\label{fig:model}
\end{figure*}

The fundamental distinction between a general regularized autoencoder and a communication system is that the former aims to learn useful features to make better classification/regression by sending messages, whereas the latter aims to minimize communication errors by designing hand-crafted features (codewords). This analogy is leveraged in this paper to design efficient coding methods by treating a communication system as a channel autoencoder for a challenging communication environment, in which designing a hand-crafted code is quite difficult. In this manner, we show that finding the optimum encoder-decoder pair with coding theory in the sense of minimum probability of bit error can give the same encoder-decoder pair that is learned through a regularized autoencoder.

An autoencoder aims to jointly learn a parameterized encoder-decoder pair by minimizing the reconstruction error at the output. That is,
\begin{equation}\label{deep_learning}
(f_{AE},g_{AE}) = \underset{f,g}{\mathrm{arg\ min\ }} J_{AE}(\theta_f,\theta_g)
\end{equation}
where $\theta_f$ and $\theta_g$ are the encoder and decoder parameters of $f:\mathbb{R}^k \rightarrow \mathbb{R}^n$ and $g:\mathbb{R}^n \rightarrow \mathbb{R}^k$ respectively, and
\begin{equation}\label{obj_fnc_ae}
J_{AE}(\theta_f,\theta_g)=\frac{1}{C}\sum_{c=1}^CL(\textbf{s}_{c},g(f(\textbf{s}_{c})))
\end{equation}
where $\textbf{s}_{c}$ is the input training vector and $C$ is the number of training samples. To find the best parameters that minimize the loss function, $L(\textbf{s}_{c},g(f(\textbf{s}_{c}))$ is defined as the negative log likelihood of $\textbf{s}_{c}$. The parameters are then trained through back-propagation and gradient descent using this loss function. The same optimization appears in a slightly different form in conventional communication theory. In this case, encoders and decoders are determined so as to minimize the transmission error probability given by
\begin{equation}\label{comm_theory}
(f^*,g^*) =\underset{f,g}{\mathrm{arg\ min\ }} \epsilon(n,M)
\end{equation}
where
\begin{equation}\label{error}
\epsilon(n,M) = \mathcal{P}[g(Y^n)\neq i | f(i)]
\end{equation} 
for a given $n$, $M$ and signal-to-noise-ratio (SNR). Note that \eqref{comm_theory} can be solved either by human ingenuity or a brute-force search. For the latter, if all possible combinations of mapping $2^k$ number of $k$-information bits to the $2^n$ codewords are observed by employing a maximum likelihood detection, the optimum linear block code can be found in terms of minimum probability of error. However, it is obvious that this is NP-hard. Thus, we propose an alternative autoencoder based method to solve \eqref{comm_theory}. 
\begin{theorem}\label{Theorem1}
The optimization problems in (\ref{deep_learning}) and (\ref{comm_theory}) are equivalent, i.e., they yield the same encoder-decoder pair for an autoencoder that has one-hot coding at the input layer and softmax activation function at the output layer, whose parameters are optimized by the cross entropy function.
%Solving (\ref{comm_theory}) so as to find the optimum encoder-decoder pair gives the same encoder-decoder pair as solving (\ref{deep_learning}) with the same $n$, $M$ and SNRs for an autoencoder that has one-hot coding at the input layer and softmax activation function at the output layer, whose parameters are optimized by the cross entropy function.
\end{theorem}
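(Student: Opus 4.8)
The plan is to reduce both optimization problems to a common pair of elementary variational problems---one governing the decoder, one governing the encoder---and then to show that the minimizers coincide. First I would make the autoencoder structure explicit: message $i$ is presented as the one-hot vector $\mathbf{e}_i\in\mathbb{R}^M$, the encoder maps it to a codeword $f(\mathbf{e}_i)\in X^n$ (subject to the channel and RF regularization already described), the channel $p(y\mid x)$ produces $\mathbf{Y}\in Y^n$, and the softmax output layer of the decoder returns a probability vector $\mathbf{q}(\mathbf{Y})\in\mathbb{R}^M$ with $\sum_{j}q_{j}(\mathbf{Y})=1$. Since the target is one-hot, the cross-entropy loss collapses to $L(\mathbf{e}_i,g(f(\mathbf{e}_i)))=-\log q_i(\mathbf{Y})$, which is exactly the negative log-likelihood that the decoder assigns to the transmitted message. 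Averaging over the $C$ i.i.d.\ training samples (i.i.d.\ uniform messages, i.i.d.\ channel noise) and letting $C\to\infty$, the law of large numbers gives $J_{AE}\to\mathbb{E}_{i,\mathbf{Y}}[-\log q_i(\mathbf{Y})]$, with $i$ uniform on $\{1,\dots,M\}$.

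Second, I would optimize the decoder with the encoder held fixed. Writing the objective as $\mathbb{E}_{\mathbf{Y}}\big[\sum_{j}P(i{=}j\mid\mathbf{Y})(-\log q_{j}(\mathbf{Y}))\big]$ and adding and subtracting $\sum_{j}P(j\mid\mathbf{y})\log P(j\mid\mathbf{y})$ inside the expectation, the bracketed term equals $H(i\mid\mathbf{Y}{=}\mathbf{y})+D_{\mathrm{KL}}\!\big(P(\cdot\mid\mathbf{y})\,\|\,\mathbf{q}(\mathbf{y})\big)$. By Gibbs' inequality the KL term is nonnegative and vanishes if and only if $\mathbf{q}(\mathbf{y})=P(\cdot\mid\mathbf{y})$; hence the globally optimal softmax output is the true posterior, and the hard decision $\widehat{\imath}(\mathbf{y})=\arg\max_{j}q_{j}(\mathbf{y})$ it induces is the MAP rule. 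Because the prior is uniform, MAP equals ML, i.e., exactly the decoder $g^{*}$ that minimizes $\epsilon(n,M)$ for that encoder. So the inner optimization over the decoder yields the same map in both problems (up to reading the softmax through an $\arg\max$), and by the universal approximation property the network can realize it.

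Third, I would optimize over the encoder. Substituting the optimal decoder back turns the autoencoder objective into $\min_{f}H(i\mid\mathbf{Y})=\min_{f}\big(\log M-I(i;\mathbf{Y})\big)$, so the autoencoder's encoder is the one maximizing the mutual information $I(i;\mathbf{Y})$ between the message and the channel output; the communication-theoretic problem instead selects the encoder minimizing the ML error probability over the same finite family of mappings $\{1,\dots,M\}\to X^n$. To close the equivalence I would argue these two encoders coincide in the setting the paper actually uses. One route is the operational regime targeted here, where the optimal code drives $\epsilon(n,M)$ toward $0$: Fano's inequality $H(i\mid\mathbf{Y})\le H_b(\epsilon)+\epsilon\log(M-1)$ together with a matching lower bound then forces $H(i\mid\mathbf{Y})$ and $\epsilon$ to be extremized by the same codebook. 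A second route, specific to the one-bit-quantized AWGN channel, would be to show that both $\epsilon(n,M)$ under ML decoding and $H(i\mid\mathbf{Y})$ are co-monotone functionals of the codebook's pairwise-confusability profile $\{P(\widehat{\imath}{=}j\mid i{=}\ell)\}$, so that both are minimized by the codebook whose images are maximally separated under $p(y\mid x)$.

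The step I expect to be the real obstacle is this last one: there is no exact identity between ``maximum mutual information'' and ``minimum ML error probability'' for an arbitrary channel and arbitrary $M$, so the encoder half of the equivalence cannot be obtained by pure manipulation and needs the extra structure above (the vanishing-error regime, or channel symmetry together with the co-monotonicity argument). Everything else---the collapse of cross-entropy to the negative log-likelihood, the passage from the empirical average $J_{AE}$ to its population version, and the ``posterior is the optimal soft decoder'' computation---is routine. Accordingly I would present the decoder equivalence in full generality and the encoder equivalence under the one-bit AWGN model assumed throughout, which is precisely the regime in which Theorem~\ref{Theorem1} is later invoked.
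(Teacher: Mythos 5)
Your first step---the collapse of the one-hot cross-entropy to $-\log q_i(\mathbf{Y})$---is exactly the paper's starting point (its equations for the cross entropy and the indicator empirical distribution). After that, the two arguments diverge. The paper never performs your two-stage optimization: it identifies the surviving softmax term $p[\hat{\mathbf{x}}\,|\,\mathbf{x}]$ directly with the probability of correct detection $\mathcal{P}[\hat{\mathbf{x}}=\mathbf{x}\,|\,\mathbf{x}]$ and then invokes the strict monotonicity of $-\log(\cdot)$ to conclude that the joint $\arg\min$ over $(f,g)$ of the loss coincides with the joint $\arg\min$ of the error probability, i.e.\ with the ML pair $(f^*,g^*)$. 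Because the comparison is made at the level of a single scalar (probability of correct decision as a function of the pair $(f,g)$), the paper never meets the mutual-information-versus-error-probability mismatch that you correctly flag as the obstacle. Your route---optimal soft decoder equals the posterior via the $H+D_{\mathrm{KL}}$ decomposition, then the residual encoder objective becomes $H(i\,|\,\mathbf{Y})=\log M - I(i;\mathbf{Y})$---is the more careful population-level analysis, and it buys rigor on the decoder side (the paper simply asserts ML optimality), but it also exposes a genuine issue: minimizing expected log-loss over codebooks is not, for arbitrary channels and finite $M$, the same as minimizing ML error probability, and your two proposed repairs (Fano in the vanishing-error regime, or co-monotonicity of both functionals in the pairwise-confusability profile) are sketched rather than executed. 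So as a verification of the paper you have reproduced the decoder half by a different and stronger argument, while for the encoder half you have identified a gap that the paper's own proof avoids only by conflating the per-realization softmax confidence with the probability of correct detection; if you want a proof in the paper's spirit, adopt that identification explicitly (the softmax output for the true class, averaged over the channel, is the probability of correct decision) and the monotonicity argument closes the loop without any appeal to mutual information; if you want your more principled version, the Fano-based route in the $\epsilon\to 0$ regime is the one consistent with how Theorem~\ref{Theorem1} is actually used later in the paper, and it is the step you should write out in full.
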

\begin{proof}
See Appendix \ref{appendix-A}.
\end{proof}

\begin{rem}\label{Remark1}
Theorem \ref{Theorem1} states that a special autoencoder that is framed for the mathematical model of a communication system, which was defined in Shannon's coding theorem, can be used to obtain the optimum channel codes for any block length. This is quite important, because there is not any known tool that gives the optimum code as a result of the mathematical modeling of a communication system. Shannon's coding theorem only states that there is at least one good code without specifying what it is, and only for infinite block lengths. Hence, autoencoders can in principle be used for any kind of environment to find optimum error correction codes. However, the autoencoder must be perfectly trained, which is challenging or impossible.
\end{rem}

\section{Quantifying Training Imperfections in Channel Autoencoders} \label{Quantifying}
The channel autoencoder specified in Theorem \ref{Theorem1} would negate the need to design sophisticated hand-crafted channel codes for challenging communication environments, if it was trained perfectly. However, training an autoencoder is a difficult task, because of the high probability of getting stuck in a local minima. This can stem from many factors such as random initialization of parameters, selection of inappropriate activation functions, and the use of heuristics to adapt the learning rate. Handling these issues is in particular difficult for deep neural networks, which leads to highly suboptimum training and generalization error. Put differently, these are key reasons why deep neural networks were not successfully trained until the seminal work of \cite{HintonTeh06}, which proposed a greedy layerwise unsupervised pretraining for initialization. In addition to this, there were other improvements related to better understanding of activation functions, e.g., using a sigmoid activation function hinders the training of lower layers due to saturated units at the top hidden layers \cite{GlorotBengio10}. Despite these advances, there is still not any universal training policy that can guarantee to approach the global minimum, and using a suboptimum training, which usually converges to a local minima in optimizing the loss function, is inevitable.

To quantify how well a suboptimum training approach can perform, we need to know the performance of the perfectly trained autoencoder. However, finding this empirically is not possible due to getting stuck in one of the local minimas. Hence, we first find the minimum required SNR to have bit error probability approaching zero (in practice, less than $10^{-5}$).  Such a low classification error can usually be achieved only if the parameters satisfy the global minima of the loss function, corresponding to perfect training. Then, we quantify the training imperfections in terms of SNR loss with respect to this minimum SNR, which serves us as a benchmark. Since our main goal is to design channel codes for one-bit quantized AWGN channels, which is treated as a one-bit quantized AWGN channel autoencoder, this method is used to quantify the training performance of this autoencoder. Here, one-bit quantization enables us to save hardware complexity and power consumption for communication systems that utilize an ever-increasing number of antennas and bandwidth particularly at high carrier frequencies \cite{BalAnd19}. In the rest of this section, we first determine the minimum required SNR level for the one-bit quantized AWGN channel autoencoder in which the autoencoder can achieve zero classification error (or bit error rate) above this SNR, and then formally show there exists a global minimum and at least one set of encoder-decoder pair parameters converges to this global minimum.

\subsection{Minimum SNR for Reliable Coding for One-Bit Quantized Channel Autoencoders}
The encoder and decoder of the one-bit quantized AWGN channel autoencoder are parameterized via two separate hidden layers with a sufficient number of neurons (or width)\footnote{We prefer to use single layer with large number of neurons instead of multiple hidden layers with fewer neurons to make the analysis simpler and clearer without any loss of generality.}. To have a tractable analysis, a linear activation function is used at the encoder -- whereas there can be any nonlinear activation function in the decoder -- and there is a softmax activation function at the output. Since an autoencoder is trained with a global reconstruction error function, nonlinearities in the system can be captured thanks to the decoder even if the encoder portion is linear. 

To satisfy Theorem \ref{Theorem1}, one-hot coding is employed for the one-bit quantized AWGN channel autoencoder, which yields a multi-class classification. Specifically, the $i^{th}$ message from the message set $\{1,2,\cdots,M\}$ is first coded to the $k$-bit information sequence $\textbf{s}$. Then, $\textbf{s}$ is converted into $\textbf{x}$ using one-hot coding, and encoded with $f$, which yields an $n$-bit codeword. Adding the noise to this encoded signal produces the unquantized received signal, which is given by
\begin{equation}\label{unquantized_sig}
\textbf{y} = \theta_f\textbf{x} +\textbf{z}
\end{equation}
where $\textbf{z}$ is the additive Gaussian noise with zero mean and variance $\sigma_z^2$, and $\theta_f$ is the encoder parameters. Here, complex signals are expressed as a real signal by concatenating the real and imaginary part. Notice that there is a linear activation function in the encoder.

One-bit quantization, which is applied element-wise, constitutes the quantized received signal
\begin{equation}\label{quantized_sig}
\textbf{r} = \mathcal{Q}(\textbf{y}) = \text{sign}(\textbf{y}).
\end{equation}
The one-bit quantized received signal is processed by the decoder $g(\cdot)$ via the parameters $\theta_g$ followed by the softmax activation function, which leads to $\hat{x}_l = [g(\textbf{r})]_l$,
%\begin{equation}
%\hat{x}_l = [g(\textbf{r})]_l,
%\end{equation}
where the output vector is $\hat{\textbf{x}}=[\hat{x}_1 \cdots \hat{x}_{d}]$ such that $d=2^k$. The parameters of $\theta_f$ and $\theta_g$ are trained by minimizing the cross entropy function between the input and output layer. This can be equivalently considered as minimizing the distance between the empirical and predicted conditional distributions. Following that it is trivial to obtain the estimate of $\hat{\textbf{s}}$ from $\hat{\textbf{x}}$.

The mutual information between the input and output vector is equal to the channel capacity\footnote{Note that the encoder and decoder of the autoencoder is considered as a part of the wireless channel, because there is some randomness in the encoder and decoder stemming from the random initialization of parameters, which affects the capacity.} 
\begin{equation}\label{cap_exp}
C = \underset{ p(\textbf{s}) }{\mathrm{max\ }} I(\textbf{s};\hat{\textbf{s}}).
\end{equation}
Assuming that symbols are independent and identically distributed, $I(\textbf{s};\hat{\textbf{s}})$ can be simplified to
\begin{equation} \label{scalarMutInfo}
\begin{split}
I(\textbf{s};\hat{\textbf{s}})  & \myeqa  \sum_{i=1}^{k}H(s_i|s_{i-1}, \cdots, s_1) - \sum_{i=1}^{k}H(s_i|s_{i-1}, \cdots, s_1, \hat{s}_1, \cdots, \hat{s}_{k}) \\
&\myeqb \sum_{i=1}^{k}H(s_i) - \sum_{i=1}^{k}H(s_i|\hat{s}_i) \\    
&\myeqc kI(s_i;\hat{s}_i)
\end{split}
\end{equation}
where (a) is due to chain rule, (b) is due to independence and (c) comes from the identical distribution assumption. The capacity of the one-bit quantized AWGN channel autoencoders can then be readily found as
\begin{equation}
C = \underset{ {k\rightarrow \infty}}{\text{lim}} \underset{ p(\textbf{s}) }{\text{sup}} \frac{1}{k}I(\textbf{s},\hat{\textbf{s}}) = \underset{ p(\textbf{s}) }{\mathrm{max\ }} I(s_i,\hat{s}_i).
\end{equation}

It is not analytically tractable to express $I(s_i;\hat{s}_i)$ in closed-form due to the decoder that yields non-Gaussian noise. However, \eqref{cap_exp} can be equivalently expressed by replacing $I(\textbf{s};\hat{\textbf{s}})$ with $I(\textbf{s};\textbf{r})$ thanks to the data processing inequality, which qualitatively states that clever manipulations of data cannot enhance the inference, i.e., $I(\textbf{s};\hat{\textbf{s}})\leq I(\textbf{s};\textbf{r})$.
\begin{lemma}\label{Lemma2}
The mutual information between $\textbf{s}$ and $\textbf{r}$ in the case of a one-bit quantized channel autoencoder is 
\begin{eqnarray}
I(\textbf{s};\textbf{r}) & \leq & n\mathbb{E}_{\theta_f}[1+Q\left(\theta_f\sqrt{\gamma}\right)\log\left(Q\left(\theta_f\sqrt{\gamma}\right)\right)+\left(1-Q\left(\theta_f\sqrt{\gamma}\right)\right)\log\left(1-Q\left(\theta_f\sqrt{\gamma}\right)\right)]
\end{eqnarray}
where $\gamma$ is the transmit SNR and $Q(t)=\int_t^{\infty}\frac{1}{\sqrt{2\pi}}e^{-t^2/2}dt$
%\begin{equation}\nonumber
%Q(t)=\int_t^{\infty}\frac{1}{\sqrt{2\pi}}e^{-t^2/2}dt,
%\end{equation}
provided the encoder parameters are initialized with Gaussian random variables
\end{lemma}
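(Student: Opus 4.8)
The plan is to upper bound $I(\textbf{s};\textbf{r})$ by a sum of per-coordinate binary mutual informations and then exploit the one-hot structure of the input together with the Gaussian statistics of $\theta_f$. First, since the one-hot map is invertible and $\textbf{s}\to\textbf{x}\to\textbf{y}\to\textbf{r}$ forms a Markov chain, the data processing inequality gives $I(\textbf{s};\textbf{r})\le I(\textbf{x};\textbf{r})=H(\textbf{r})-H(\textbf{r}\,|\,\textbf{x})$, so it suffices to bound the right-hand side. For the first term, $\textbf{r}\in\{-1,+1\}^n$, so subadditivity of entropy gives $H(\textbf{r})\le\sum_{j=1}^{n}H(r_j)\le n$ (in bits, as each $r_j$ is binary). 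For the conditional term, note that if $\textbf{x}=\textbf{e}_m$ is the $m$-th one-hot vector then $\textbf{y}=\theta_f\textbf{e}_m+\textbf{z}$ is the $m$-th column of $\theta_f$ perturbed by the independent Gaussian vector $\textbf{z}$; hence the coordinates $r_j=\mathrm{sign}(y_j)$ are conditionally independent given $\textbf{x}$, so $H(\textbf{r}\,|\,\textbf{x})=\sum_{j=1}^{n}H(r_j\,|\,\textbf{x})$. Subtracting this equality from the inequality for $H(\textbf{r})$ yields $I(\textbf{x};\textbf{r})\le\sum_{j=1}^{n}\bigl(1-H(r_j\,|\,\textbf{x})\bigr)$.

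Next I would evaluate $H(r_j\,|\,\textbf{x})$ in closed form. Conditioned on $\textbf{x}=\textbf{e}_m$, we have $P[r_j=-1\,|\,\textbf{x}=\textbf{e}_m]=P[z_j<-(\theta_f)_{jm}]=Q\bigl((\theta_f)_{jm}\sqrt{\gamma}\bigr)$, after folding the noise variance in \eqref{unquantized_sig} into the transmit SNR $\gamma$. Writing $H_b(p)=-p\log p-(1-p)\log(1-p)$ for the binary entropy and using the uniform message prior assumed in Section \ref{Channel Autoencoders}, this gives $H(r_j\,|\,\textbf{x})=\frac{1}{M}\sum_{m=1}^{M}H_b\!\left(Q\!\left((\theta_f)_{jm}\sqrt{\gamma}\right)\right)$, and therefore
\begin{equation}
I(\textbf{s};\textbf{r})\;\le\; n-\frac{1}{M}\sum_{j=1}^{n}\sum_{m=1}^{M}H_b\!\left(Q\!\left((\theta_f)_{jm}\sqrt{\gamma}\right)\right).
\end{equation}
Finally, because the $nM$ weights $(\theta_f)_{jm}$ are drawn i.i.d.\ Gaussian, the empirical average over $m$ (and $j$) concentrates on $\mathbb{E}_{\theta_f}\bigl[H_b(Q(\theta_f\sqrt{\gamma}))\bigr]$, so the double sum equals $n\,\mathbb{E}_{\theta_f}\bigl[H_b(Q(\theta_f\sqrt{\gamma}))\bigr]$; rearranging via $1-H_b(p)=1+p\log p+(1-p)\log(1-p)$ produces the claimed bound.

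The main obstacle, and the only place the Gaussian-initialization hypothesis is really needed, is this last step: passing from the realized weights $(\theta_f)_{jm}$ to the deterministic quantity $\mathbb{E}_{\theta_f}[\,\cdot\,]$. I would make it rigorous either by taking expectation over $\theta_f$ on both sides of the displayed inequality (turning it into an exact statement about the average mutual information over random initializations), or by a law-of-large-numbers/concentration argument in the large-width regime ($M$, and hence $n$, large) assumed throughout Section \ref{Quantifying}; the boundedness and continuity of $p\mapsto H_b(Q(p\sqrt{\gamma}))$ on $\mathbb{R}$ make either route routine. A secondary bookkeeping point is to confirm that the per-coordinate signal amplitude and noise variance in \eqref{unquantized_sig} combine into the single factor $\sqrt{\gamma}$ multiplying $\theta_f$ inside $Q(\cdot)$, which is just a normalization convention and not substantive.
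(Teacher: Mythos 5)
Your proposal is correct and follows essentially the same route as the paper's proof: a coordinatewise decomposition of the mutual information, the bound $H(r_j)\le 1$ bit for the binary quantized outputs, the identification $P[r_j\,|\,\textbf{x}]=Q(\theta_f\sqrt{\gamma})$ via the one-hot input and Gaussian noise, and an expectation over the Gaussian-initialized weights. The only (minor) difference is bookkeeping: the paper introduces $\mathbb{E}_{\theta_f}$ at the outset by writing $I(\textbf{s};\textbf{r})=\mathbb{E}_{\theta_f}[I(\textbf{s};\textbf{r}\,|\,\theta_f)]$, which is exactly your first suggested resolution of the "main obstacle," while your explicit use of conditional independence given $\textbf{x}$ (rather than the paper's looser unconditional independence claim) is if anything the more careful formulation.
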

\begin{proof}
See Appendix \ref{appendix-B}.
\end{proof}

It is worth emphasizing that the most common weight initialization in deep neural networks is to use Gaussian random variables \cite{GlorotBengio10}, \cite{HeSun15}. The minimum required SNR $\gamma_{\rm min}$ for the one-bit quantized AWGN channel autoencoder can be trivially found through Lemma \ref{Lemma2} when the code rate $R=\log_2(M)/n$ is equal to the capacity. That is, $\gamma_{\rm min} = \underset{ \{R=C \} }{\min}{\gamma}$. 
%\begin{equation}\label{optimizationProblem}
%\gamma_{\rm min} = \underset{ \{R=C \} }{\min}{\gamma}.  
%\end{equation}
Specifically, the capacity is numerically evaluated in Fig. \ref{fig:capacity} using Lemma \ref{Lemma2} so as to determine the minimum required SNR to suppress the regularization impact for the one-bit quantized AWGN channel autoencoder. To illustrate, for a code rate of $\frac{1}{3}$, we find $\gamma_{\rm min} = 1.051$ dB. This means that if the one-bit quantized AWGN channel autoencoder is perfectly trained, it gives almost zero classification error above an SNR of $1.051$ dB.
\begin{figure} [!t] 
\centering 
\includegraphics [width=5in]{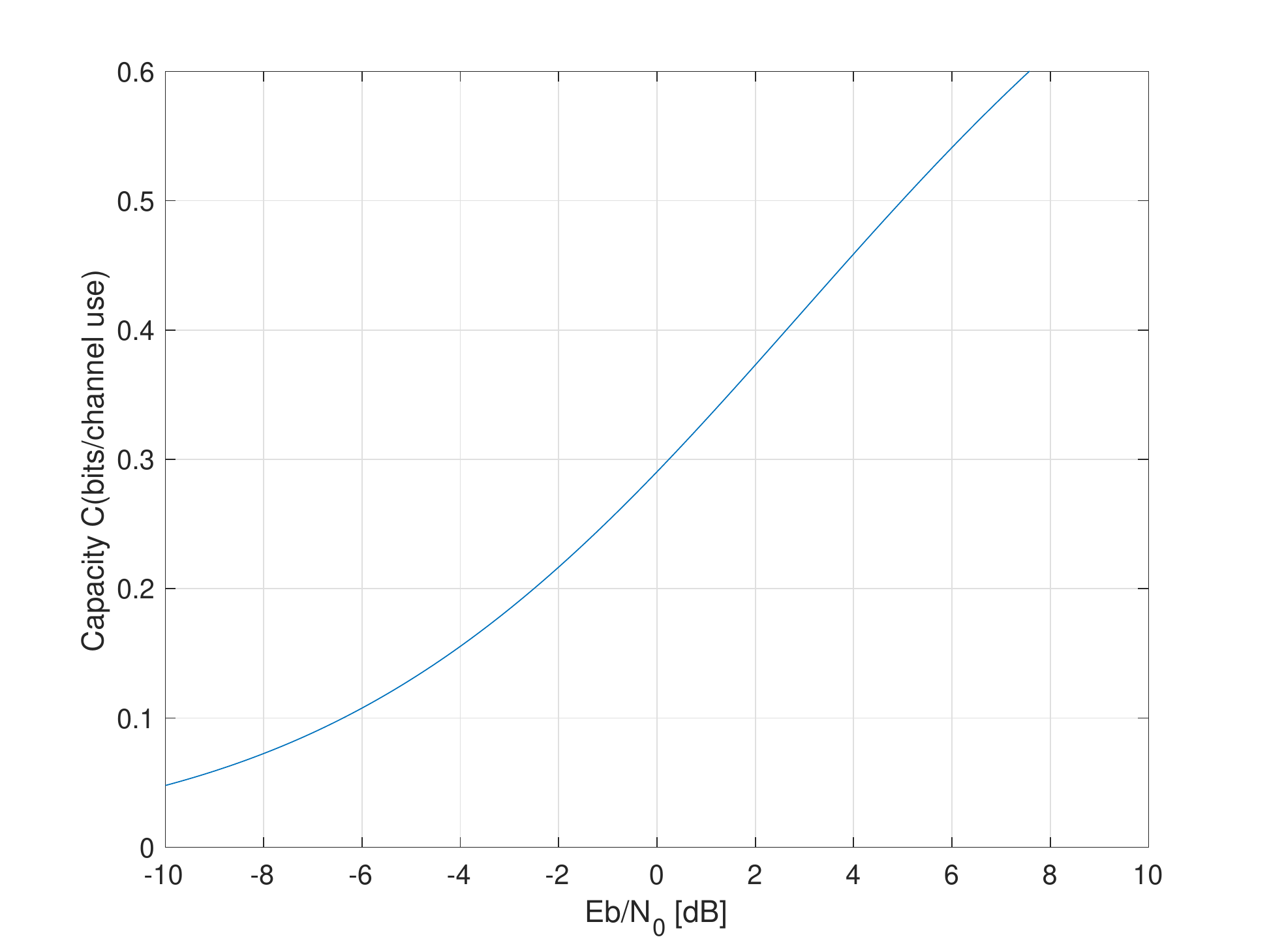}
\caption{The capacity of the one-bit quantized AWGN channel autoencoders in terms of ${\rm E_b/N_0}$. Here we observe for $C = \frac{1}{3}$ bits/use, $\gamma_{\rm min} = 1.051$ dB and for $C = \frac{1}{2}$ bits/use, $\gamma_{\rm min} = 4.983$ dB.}\label{fig:capacity}
\end{figure}

\subsection{Existence of the Global Minimum}
To achieve zero classification error above the minimum required SNR, the parameters of the encoder and decoder are trained such that the loss function converges to the global minimum. Next, we prove that there exists a global minima and at least one set of encoder-decoder parameters converges to this global minima.

\begin{theorem}\label{Theorem2}
For channel autoencoders, there is a global minima and at least one set of encoder-decoder pair parameters converges to this global minimum above the minimum required SNR. 
\end{theorem}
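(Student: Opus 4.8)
The plan is to construct a global minimizer $(\theta_f^\star,\theta_g^\star)$ explicitly, exploiting the fact that after quantization the decoder only ever sees inputs from the \emph{finite} alphabet $\{-1,+1\}^n$. First I would record two elementary properties of the training objective. Writing $J_{AE}(\theta_f,\theta_g)=\mathbb{E}\big[-\textstyle\sum_l [\textbf{x}]_l\log[g(\mathrm{sign}(\theta_f\textbf{x}+\textbf{z}))]_l\big]$, with the expectation over the uniform message and the Gaussian noise $\textbf{z}$: (i) $J_{AE}\ge 0$, since each summand is the negative logarithm of a softmax coordinate; and (ii) although $\mathrm{sign}(\cdot)$ is discontinuous, marginalizing over $\textbf{z}$ turns $J_{AE}$ into a finite sum over $\textbf{r}\in\{-1,+1\}^n$ of terms weighted by $\mathcal{P}[\textbf{r}|\textbf{x}]$, which is a product of Gaussian $Q$-functions of the entries of $\theta_f\textbf{x}$ and hence smooth in $\theta_f$; therefore $J_{AE}$ is continuous in $(\theta_f,\theta_g)$.

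Next I would optimize the decoder for a fixed encoder. Using the standard identity $J_{AE}=H(\textbf{s}|\textbf{r})+\mathbb{E}_{\textbf{r}}\big[D_{KL}\big(p(\cdot|\textbf{r})\,\|\,g(\textbf{r})\big)\big]$, the second term is nonnegative and vanishes precisely when the decoder reproduces the Bayes posterior $p(\cdot|\textbf{r})$. Because $\textbf{r}$ ranges over the finite set $\{-1,+1\}^n$ and, with Gaussian noise, every codeword can produce every $\textbf{r}$ with positive probability, $p(\cdot|\textbf{r})$ lies in the interior of the probability simplex and can therefore be realized \emph{exactly} by one sufficiently wide ReLU hidden layer followed by the softmax -- this is exactly where the ``sufficient number of neurons'' hypothesis enters, and it is stronger than a mere universal-approximation statement because the domain is finite. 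Consequently $\min_{\theta_g}J_{AE}(\theta_f,\theta_g)=H(\textbf{s}|\textbf{r})=H(\textbf{s})-I(\textbf{s};\textbf{r})$, and this minimum over $\theta_g$ is attained.

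It then remains to minimize $H(\textbf{s}|\textbf{r})$ over the encoder. The transmit-power normalization implicit in the SNR definition confines $\theta_f$ to a compact set, and $\theta_f\mapsto H(\textbf{s}|\textbf{r})$ is continuous by the smoothness noted above, so a minimizer $\theta_f^\star$ exists; pairing it with the corresponding Bayes decoder $\theta_g^\star$ gives a genuine global minimizer of $J_{AE}$, which establishes both the existence of the global minimum and the existence of at least one parameter set attaining it. Finally, to bring in the SNR threshold I would invoke Lemma \ref{Lemma2}: above $\gamma_{\rm min}$ the mutual-information budget exceeds the code rate, so $\theta_f^\star$ can be taken with $I(\textbf{s};\textbf{r})$ within a vanishing gap of $H(\textbf{s})$; then $H(\textbf{s}|\textbf{r})$ is below the $10^{-5}$ design target and, via an entropy--error bound of the form $P_e\le H(\textbf{s}|\textbf{r})$ (the decoder's hard decision being MAP at the global minimum), so is the classification error there. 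This is consistent with Theorem \ref{Theorem1}, since the pair $(\theta_f^\star,\theta_g^\star)$ just constructed is exactly the minimum-error code of the equivalent problem \eqref{comm_theory}.

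The main obstacle, I expect, is the \emph{attainment} claim rather than the mere existence of an infimum: if the power constraint is dropped the encoder parameters are unbounded and $H(\textbf{s}|\textbf{r})$ decreases toward -- but never reaches -- $0$ as $\|\theta_f\|$ grows, so the argument genuinely hinges on compactness of the feasible set, and one must likewise argue that a global minimizer in $\theta_g$ can be kept in a compact set (the Bayes posterior has finite logits because it is interior to the simplex). A secondary subtlety is that the theorem's ``zero classification error above $\gamma_{\rm min}$'' is really ``error below the $10^{-5}$ design target,'' which follows from the entropy--error inequality together with the slight slack between a finite block length and the asymptotic condition $R=C$ used to define $\gamma_{\rm min}$, since at any finite SNR Gaussian noise leaves every pair of codewords confusable with positive probability.
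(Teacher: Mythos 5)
Your construction is a genuinely different argument from the paper's. The paper never touches the loss function: it argues that the channel-autoencoder setup satisfies the three ingredients of the proof of Shannon's channel coding theorem --- random code selection (the random initialization of $\theta_f$), a decoder at least as strong as jointly typical decoding (ML, via Theorem \ref{Theorem1}), and no bound on the block length (the encoder width is unconstrained) --- and concludes that a zero-error global minimum exists because a good code exists. You instead construct the minimizer directly: the decomposition $J_{AE}=H(\textbf{s}|\textbf{r})+\mathbb{E}_{\textbf{r}}\left[D_{KL}\left(p(\cdot|\textbf{r})\,\|\,g(\textbf{r})\right)\right]$, exact realizability of the Bayes posterior on the finite alphabet $\{-1,+1\}^n$, and Weierstrass on a compact, power-constrained set of encoder weights. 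What your route buys is a treatment of \emph{attainment}, which the paper does not address at all: you correctly identify that without the power constraint the infimum over $\theta_f$ need not be achieved, and that the interior-of-the-simplex property is what keeps the optimal decoder's logits finite. The smoothness observation (marginalizing the Gaussian noise turns the loss into a finite sum of products of $Q$-functions) is also a clean way to dispose of the discontinuity of $\mathrm{sign}(\cdot)$.

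Where your argument has a real gap is the SNR half of the claim. From $\gamma>\gamma_{\rm min}$, i.e.\ $R<C$, you cannot conclude that the optimal encoder at the \emph{fixed} block length $n$ achieves $I(\textbf{s};\textbf{r})$ within a vanishing gap of $H(\textbf{s})$. Capacity upper-bounds $I(\textbf{s};\textbf{r})$ by $nC$; the condition $R<C$ only says this upper bound exceeds $H(\textbf{s})=nR$, not that it is nearly met, and at finite $n$ and finite SNR $H(\textbf{s}|\textbf{r})$ is bounded away from zero (as you note yourself, every $\textbf{r}$ is reachable from every codeword). Driving $H(\textbf{s}|\textbf{r})\to 0$ at a fixed rate below capacity requires an achievability argument --- random coding or error exponents with $n\to\infty$ --- which is precisely the paper's condition (iii) on unbounded block length. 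So your closing ``subtlety'' is actually a required step: either invoke the coding theorem as the paper does, or restate the conclusion as ``the attained global minimum value $\min_{\theta_f}H(\textbf{s}|\textbf{r})$ can be made arbitrarily small by increasing the width/block length whenever $\gamma>\gamma_{\rm min}$.'' The Hellman--Raviv-type bound $P_e\le\tfrac{1}{2}H(\textbf{s}|\textbf{r})$ you use to convert entropy into error probability is fine, but only once the entropy has actually been made small.
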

\begin{proof}
The depth and width of the neural layers in an autoencoder are determined beforehand, and these do not change dynamically. This means that $n$ and $M$ -- and hence the code rate -- is fixed. With sufficient SNR, one can ensure that this code rate is below the capacity, in which Shannon's coding theorem guarantees reliable (almost zero error) communication. To satisfy this for the autoencoder implementation of communication systems, the necessary and sufficient conditions in the proof of Shannon's channel coding theorem must be hold, which are (i) random code selection; (ii) jointly typical decoding; (iii) no constraint for unboundedly increasing the block length. It is straightforward to see that (i) is satisfied, because the encoder parameters are randomly initialized. Hence, the output of the encoder gives a random codeword. For (ii), Theorem \ref{Theorem1} shows that the aforementioned autoencoder results in maximum likelihood detection. Since maximum likelihood detection is a stronger condition than jointly typical decoding to make optimum detection, it covers the condition of jointly typical decoding and so (ii) is satisfied as well. For the last step, there is not any constraint to limit the width of the encoder layer. This means that (iii) is trivially met. Since channel autoencoders satisfy the Shannon's coding theorem, which states there is at least one good channel code to yield zero error communication, there exists a global minima that corresponds to the zero error communication, which can be achieved with at least one set of encoder-decoder parameters.
\end{proof}

It is not easy to converge to encoder-decoder parameters that result in global minimum due to the difficulties in training deep networks as mentioned previously. Additionally, the required one-hot coding in the architecture exponentially increases the input dimension, which renders it infeasible for practical communication systems, especially for high-dimensional communication signals. Thus, more practical autoencoder architectures are needed to design channel codes for one-bit quantization without sacrificing the performance.

\section{Practical Code Design for One-Bit Quantization} \label{Code Design}
%Code design is a challenging task even for simple channels. Thus, most channel codes target AWGN channels and ignore any hardware imperfections. 
To design a coding scheme under the constraint of one-bit ADCs for AWGN channels,  our approach -- motivated by Theorem \ref{Theorem1} -- is to make use of an autoencoder framework. Hence, we transform the code design problem for one-bit quantized AWGN channel to the problem of learning encoder-decoder pair for a special regularized autoencoder, in which the regularization comes from the one-bit ADCs and Gaussian noise. However, the one-hot encoding required by Theorem \ref{Theorem1} is not an appropriate method for high-dimensional communication signals, because this exponentially increases the input dimension while training neural networks. Another challenge is that one-bit quantization stymies gradient based learning for the layers before quantization, since it makes the derivative $0$ everywhere except at point $0$, which is not even differentiable. To handle all these challenges, we propose to train a practical but suboptimum autoencoder architecture and stack it with a state-of-the-art channel code that is designed for AWGN channels, but not for one-bit ADCs. The details of this design are elaborated next. In what follows, we justify the novelty of the proposed model in terms of machine learning principles.

\subsection{Autoencoder-Based Code Design}
To design a practical coding scheme for one-bit quantized communication, we need a practical (suboptimum) one-bit quantized AWGN channel autoencoder architecture. For this purpose, the one-bit quantized OFDM architecture proposed in \cite{BalAnd19} is modified for AWGN channels and implemented with time domain oversampling considering the pulse shape. This architecture is depicted in Fig. \ref{fig:one-bit-ae}, where the encoder includes the precoder, channel and equalizer. Note that there is a noise between the $l_0$ and $l_1$ layers that represents the noisy output of the equalizer. The equalized signal is further one-bit quantized, which corresponds to hard decision decoding, i.e., the decoder processes the signals composed of $\pm 1$. This facilitates training, which will be explained.
\begin{figure}[!t]
\centering
\subfigure[]{
\label{fig:one-bit-ae-a}
\includegraphics[width=3.75in]{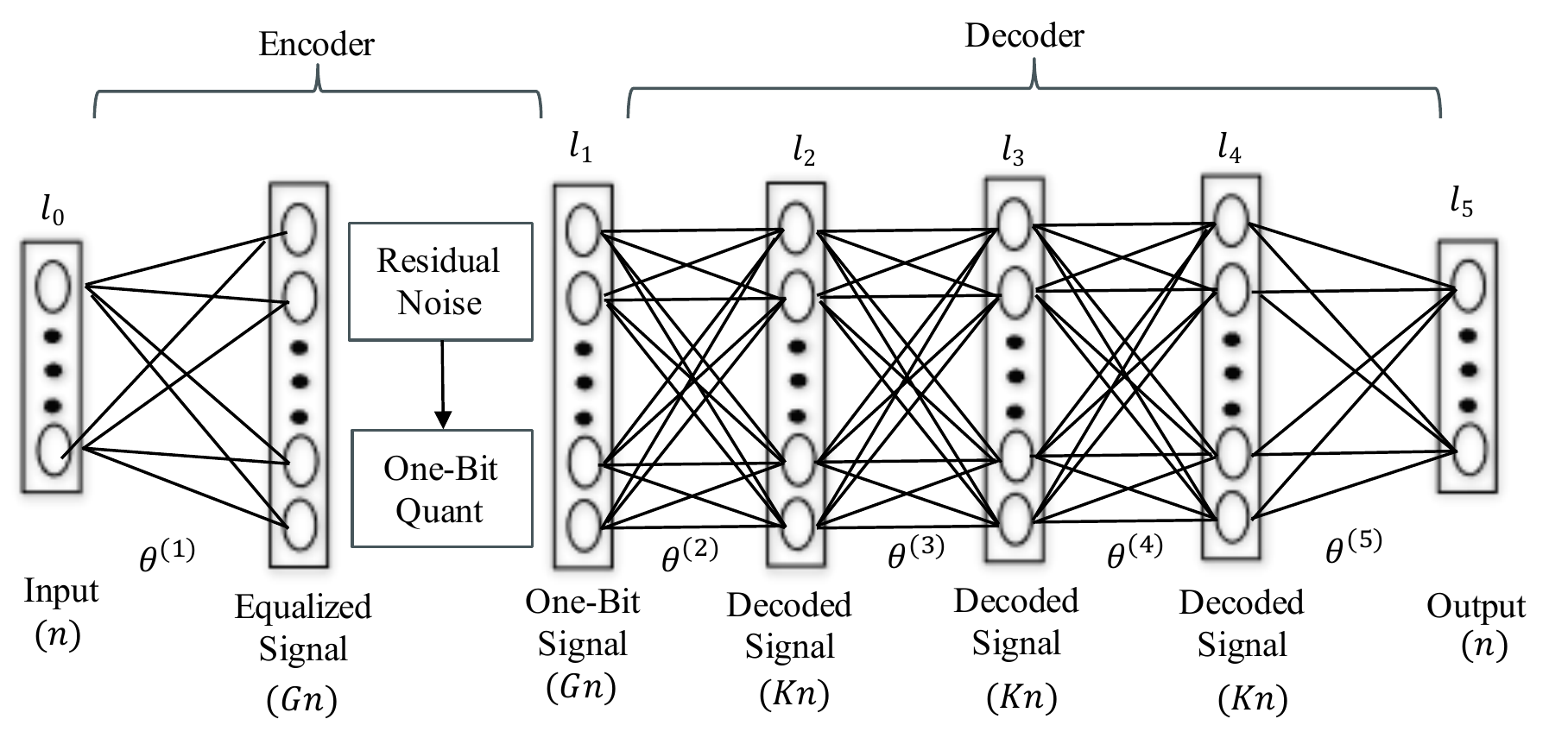}}
\qquad
\subfigure[]{
\label{fig:one-bit-ae-b}
\includegraphics[width=3.5in]{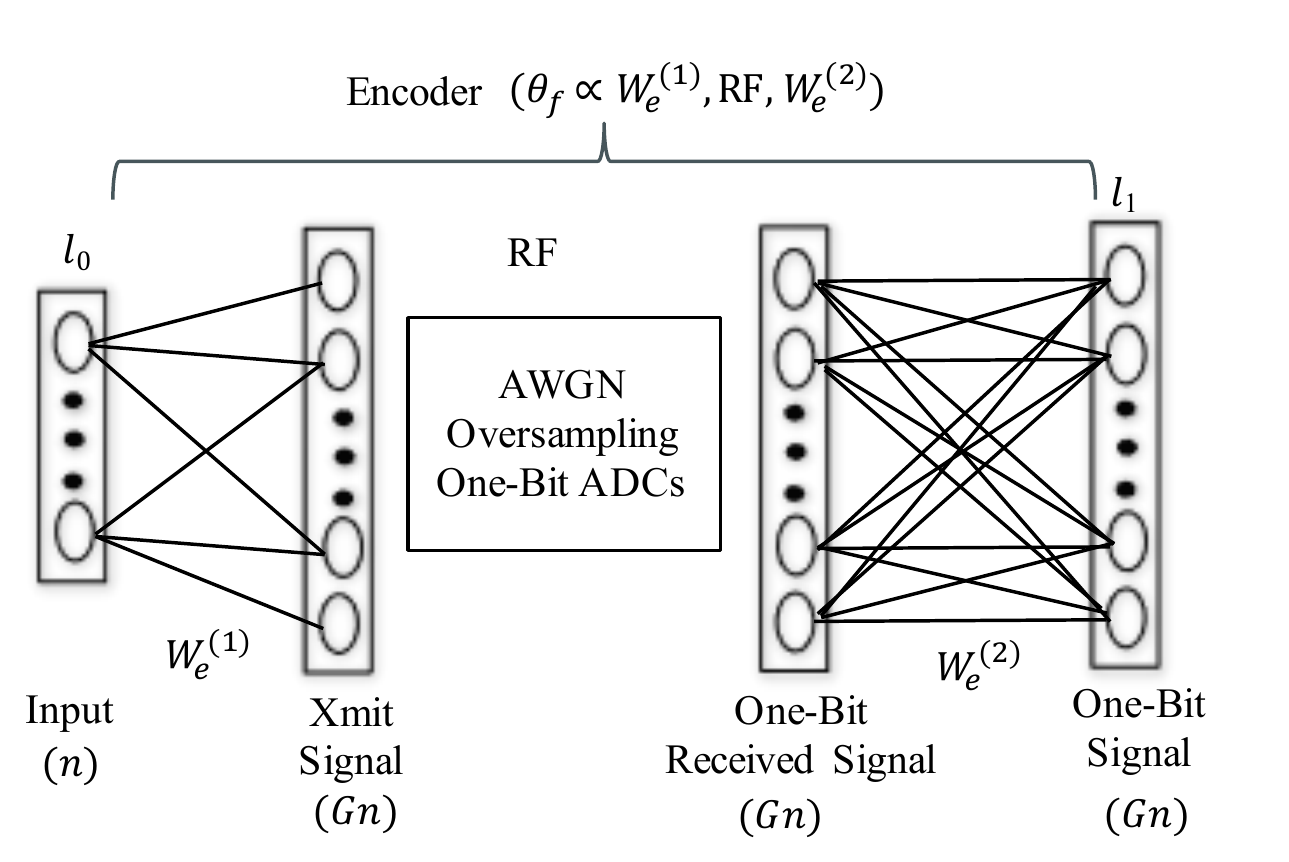}}
\caption{The one-bit quantized AWGN channel autoencoder that is trained in two-steps: (a) In the first step, only the decoder parameters $\theta_g = \{\theta^{(2)}, \cdots, \theta^{(5)}\}$ are trained (b) In the second step, the encoder parameters $\theta_f = \theta^{(1)} $ are trained.}
\label{fig:one-bit-ae}
\end{figure}

In this model, the binary valued input vectors are directly fed into the encoder without doing one-hot coding. This means that the input dimension is $n$ for $n$ bits. The key aspect of this architecture is to increase the input dimension by $G$ before quantization. This dimension is further increased by $K/G$, where $K>G$ while decoding the signal. Although it might seem that there is only one layer for the encoder in Fig. \ref{fig:one-bit-ae-a}, this in fact corresponds to the two neural layers and the RF part as detailed in Fig. \ref{fig:one-bit-ae-b}. The encoded signal is normalized to satisfy the transmission power constraint. There are $3$ layers in the decoder with the same dimension, in which the ReLU is used for activation. On the other hand, a linear activation function is used at the output, and the parameters are trained so as to minimize the mean square error between the input and output layer. Additionally, batch normalization is utilized after each layer to avoid vanishing gradients \cite{IoffeSzegedy15}. 

The two-step training policy is used to train the aforementioned autoencoder as proposed in \cite{BalAnd19}. Accordingly, in the first step shown in Fig. \ref{fig:one-bit-ae-a}, the decoder parameters are trained, whereas the encoder parameters $\theta_f$ are only randomly initialized, i.e., they are not trained due to the one-bit quantization. In the second step given in Fig. \ref{fig:one-bit-ae-b}, the encoder parameters are trained according to the trained and frozen decoder parameters by using the stored values of $l_0$ and $l_1$ layers in the first step in a supervised learning setup. Here, the precoder in the transmitter is determined by the parameters $W_e^{(1)}$. Then, the coded bits are transmitted using a pulse shaping filter $p(t)$ over an AWGN channel. In particular, these are transmitted with period $T/G$. In the receiver, the signal is processed with a matched filter $p^*(-t)$, oversampled by $G$, and quantized. This RF part corresponds to faster-than-Nyquist transmission, whose main benefit is to exploit the available excess bandwidth in the communication system. Notice that this transmission method is not employed in conventional codes, because it creates inter-symbol interference and leads to non-orthogonal transmission that degrades the tractability of the channel codes. The quantized signal is further processed by a neural layer or $W_e^{(2)}$ followed by another one-bit quantization so as to obtain the same $l_1$ layer in which the decoder parameters are optimized. The aim of the second one-bit quantization is to obtain exactly the same layer that the decoder expects, which would be impossible if the $l_1$ layer became a continuous valued vector. Since the decoder part of the autoencoder processes $\pm 1$, the proposed model can be considered as having a hard decision decoder.
%\begin{figure} [!h] 
%\centering 
%\includegraphics [width=4in]{}
%\caption{The RF structure of the one-bit quantized channel autoencoders for AWGN channels.}\label{fig:RFpart}
%\end{figure} 

The one-bit quantized AWGN channel autoencoder architecture apparently violates Theorem \ref{Theorem1} that assures the optimum coding, because neither one-hot coding nor softmax activation function is used. Additionally, ideal training is  not possible due to one-bit quantization. Thus, it does not seem possible to achieve almost zero error probability in detection with this suboptimum architecture and suboptimum training  even if $\gamma>\gamma_{\rm min}$. To cope with this problem, we propose employing an implicit regularizer that can serve as a priori information. More specifically, turbo coding is combined with the proposed autoencoder without any loss of generality, i.e., other off-the-shelf coding methods can also be used.  

The proposed coding scheme for AWGN channels under the constraint of one-bit ADC is given in Fig. \ref{fig:hybrid_code}, where the outer code is the turbo code and the inner code is the one-bit quantized AWGN channel autoencoder. In this concatenated code, the outer code injects strong a priori information for the inner code. Specifically, the bits are first coded with a turbo encoder for a given coding rate and block length. Then, the turbo coded bits in one block are divided into smaller subblocks, each of which is sequentially processed (or coded) by the autoencoder. In this manner, the autoencoder behaves like a convolutional layer by multiplying the subblocks within the entire block with the same parameters. Additionally,  dividing the code block into subblocks ensures reasonable dimensions for the neural layers. It is important to emphasize that the autoencoder does not consume further bandwidth. Rather, it exploits the excess bandwidth of the pulse shaping and packs the signal more intelligently by exploiting the sparsity in the autoencoder due to using ReLU, which means that nearly half of the input symbols are set to $0$ assuming that input is either $+1$ or $-1$ with equal probability. The double-coded bits (due to turbo encoder and autoencoder) are first decoded by the autoencoder. Then, the output of the autoencoder for all subblocks are aggregated and given to the outer decoder. %Note that the outer decoder can also be implemented as a neural network \cite{KimViswanath18}, which means that it is possible to design receivers after quantization solely with neural networks. 
\begin{figure} [!t] 
\centering 
\includegraphics [width=4in]{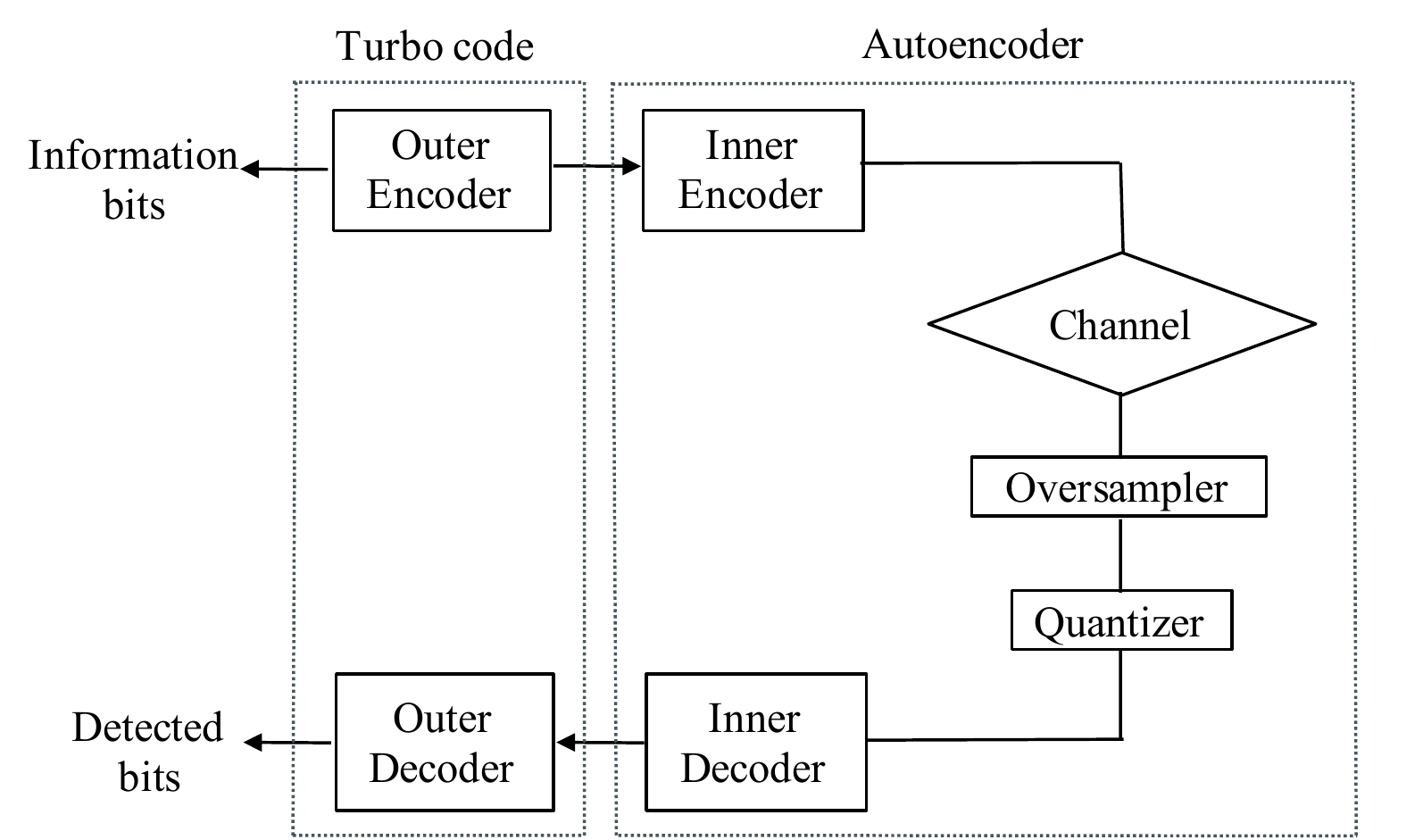}
\caption{The proposed concatenated code for the one-bit quantized AWGN channels, in which the outer code is the turbo code and the inner code is the autoencoder.}\label{fig:hybrid_code}
\end{figure} 

A concrete technical rationale for concatenating a turbo code and autoencoder is to provide Gaussian distributed data to the turbo decoder, which is optimized for AWGN and is known to perform very close to theoretical limits for Gaussian distributed data.  Below we formally prove that an autoencoder centered on the channel produces conditional Gaussian distributed data for the turbo decoder as in the case of AWGN channel even if there are some significant nonlinearities, such as one-bit quantization. 

\begin{theorem}\label{Theorem3}
The conditional probability distribution of the output of the autoencoder's decoder -- which is the input to the turbo decoder --conditioned on the output of the turbo encoder is a Gaussian process, despite the one-bit quantization at the front end of the receiver.
\end{theorem}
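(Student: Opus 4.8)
The plan is to show that the Gaussianity is manufactured \emph{inside} the decoder by the large width of its hidden layers, rather than being inherited from the channel; the only effect of the one-bit front end is to fix the second-order statistics of the decoder's input. First I would condition on the turbo-encoder output $\mathbf{b}$ and observe that, once the two-step training has frozen $\theta_f=\theta^{(1)}$ and $\theta_g=\{\theta^{(2)},\dots,\theta^{(5)}\}$, the autoencoder encoder, the faster-than-Nyquist RF chain, the matched filter/oversampler, and both one-bit quantizers are deterministic maps, so the sole source of randomness is the AWGN sample $\mathbf{z}$. Propagating $\mathbf{z}$ forward produces a $\pm1$-valued decoder input $\mathbf{r}$ whose coordinates are Bernoulli with crossover probabilities given precisely by the $Q$-function factors appearing in Lemma \ref{Lemma2}. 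Since $\mathbf{r}$ is discrete, the claimed Gaussianity cannot hold at the decoder input; it must emerge after the first wide linear map, and it can be asserted only in the wide-layer limit (equivalently, up to a vanishing approximation error for finitely many neurons).

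Next I would run a layer-by-layer central-limit argument. The first decoder pre-activation of neuron $i$ is $h^{(1)}_i=\sum_{j}\theta^{(2)}_{ij}r_j+b^{(1)}_i$, a weighted sum of the bounded, weakly dependent variables $r_j$ --- weakly dependent because the finite support of the pulse-shaping filter makes $\{r_j\}$ an $m$-dependent array conditioned on $\mathbf{b}$. Provided the trained row $(\theta^{(2)}_{ij})_j$ carries no dominant entry (a Lyapunov/Lindeberg condition that batch normalization and the Gaussian weight initialization assumed in Lemma \ref{Lemma2} make natural), a CLT for weakly dependent triangular arrays gives that $h^{(1)}_i$ is asymptotically Gaussian, and jointly the vector $\mathbf{h}^{(1)}$ is asymptotically multivariate Gaussian with mean and covariance fixed by the weight Gram matrix and the first two moments of $\mathbf{r}$. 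I would then induct on depth: if the inputs to layer $\ell$ are jointly Gaussian, then the post-activations --- ReLU composed with batch normalization applied to a Gaussian vector --- are identically distributed across the wide layer with finite moments (a rectified-Gaussian law with closed-form moments), and the layer-$(\ell{+}1)$ pre-activations, being sums of many such terms against fixed bounded weights, are again asymptotically Gaussian by the same CLT. Since the output layer $\theta^{(5)}$ uses a linear activation, the decoder output $\hat{\mathbf{x}}$ is a linear image of a Gaussian vector, hence Gaussian; and because distinct turbo-coded subblocks are processed independently given $\mathbf{b}$ (independent noise realizations), the concatenation of all per-subblock decoder outputs is jointly Gaussian. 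Thus the data handed to the turbo decoder is a Gaussian process whose conditional law depends on $\mathbf{b}$ only through the $Q$-function crossover probabilities --- exactly the AWGN-like situation the turbo decoder is optimized for.

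The main obstacle is controlling the two dependence structures that a naive scalar CLT would ignore. First, the oversampled matched filter deliberately creates inter-symbol interference, so the coordinates of the unquantized vector $\theta_f\mathbf{x}+\mathbf{z}$, and therefore of $\mathbf{r}$, are correlated; one must show that the finite pulse support renders this correlation $m$-dependent (or at worst absolutely summable), so that a CLT for dependent arrays is legitimately available. Second, inside the network the post-ReLU outputs of a layer are \emph{not} independent across neurons --- they are coupled through the shared lower layer --- so the inductive step genuinely requires a multivariate/exchangeable CLT, or equivalently the standard wide-network concentration fact that the empirical distribution of a layer's pre-activations stabilizes as the width grows. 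A secondary technical point is verifying that training does not concentrate the weights onto a few coordinates (which would break the Lyapunov condition); here I would lean on batch normalization, which explicitly rescales each layer, together with the Gaussian initialization of Lemma \ref{Lemma2}. Once these regularity conditions are in place, the Gaussian-process conclusion follows by routine propagation of the first two moments through the network.
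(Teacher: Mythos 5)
Your proposal takes a fundamentally different route from the paper, and the route has a gap that I do not think can be closed. The paper's proof (Appendix~\ref{appendix-C}) places the randomness in the \emph{weights}: all weights and biases are Gaussian at initialization, so in the infinite-width limit the pre-activations conditioned on the input form a Gaussian process via the standard NNGP covariance recursion, and the linearization result of wide-network training dynamics (\cite{JaehoonLee19}) shows the trained output is the initial output plus a linear (Jacobian) perturbation up to $\mathcal{O}(\mathrm{width}^{-1/2})$, hence still Gaussian. You instead freeze the trained weights and place all the randomness in the channel noise, then try to recover Gaussianity by a width-indexed CLT inside the decoder. The first-layer step already lives in the wrong asymptotic regime: your CLT for $h^{(1)}_i=\sum_j\theta^{(2)}_{ij}r_j$ needs the number of summands --- the input dimension $GN$, fixed at a few hundred by the architecture --- to grow, whereas the limit actually available is the hidden width $KN$.

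The inductive step is where the argument genuinely fails. With deterministic trained weights, every post-activation $\phi(h^{(1)}_j)$, $j=1,\dots,KN$, is a fixed function of the \emph{same} finite-dimensional random seed $\mathbf{r}\in\{\pm1\}^{GN}$ (equivalently, of the single AWGN realization). These are not weakly dependent in any sense that decays with $|j-j'|$ or with width; they are all measurable with respect to one low-dimensional random vector. Consequently $\sum_j\theta^{(3)}_{ij}\phi(h^{(1)}_j)$ is simply a fixed piecewise-linear function of that seed, and making the width large adds no new independent randomness that a Lindeberg or $m$-dependent CLT could exploit --- a linear combination of arbitrarily many perfectly coupled rectified Gaussians need not be Gaussian. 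The ``standard wide-network concentration fact'' you invoke to rescue this step is precisely the NNGP correspondence, and it holds only because the weights are i.i.d.\ random across neurons; that is exactly the hypothesis you discarded by conditioning on the trained network. To repair the argument you would have to reinstate weight randomness and control its persistence under training, which is what the paper does via the NTK linearization. (It is fair to note that your conditional-on-weights formulation is arguably the operationally relevant one for the turbo decoder, and the paper's ensemble-over-initializations statement sidesteps that question; but as a proof of Theorem~\ref{Theorem3} your chain of CLTs does not go through.)
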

\begin{proof}
See Appendix \ref{appendix-C}.
\end{proof}

\begin{rem}
Theorem \ref{Theorem3} has important consequences, namely that even if there is a nonlinear operation in the channel or RF portion of the system, building an autoencoder around the channel provides a Gaussian distributed input to the decoder, and so standard AWGN decoders can be used without degradation. This brings robustness to the turbo codes against any nonlinearity in the channel: not just quantization but also phase noise, power amplifier nonlinearities, or nonlinear interference.
\end{rem}

\subsection{The Proposed Architecture as of relative to Deep Learning Principles}
Choosing some initial weights and moving through the parameter space in a succession of steps does not help to find the optimum solution in high-dimensional machine learning problems \cite{Bengio09}. Hence, it is very unlikely to achieve reliable communication by randomly initializing the encoder and decoder parameters and training these via gradient descent. This is particularly true if there is a non-differentiable layer in the middle of a deep neural network as in the case of one-bit quantization. Regularization is a remedy for such deep neural networks whose parameters cannot be initialized and trained properly. However, it is not clear what kind of regularizer should be utilized: it is problem-specific and there is not any universal regularizer. Furthermore, it is not easy to localize the impact of regularization from the optimization. To illustrate, in the seminal work of \cite{HintonTeh06} that successfully trains a deep network for the first time by pretraining all the layers and then stacking them together, it is not well understood whether the improvement is due to better optimization or better regularization \cite{BengioVincent13}.

Utilizing a novel implicit regularization inspired by coding theory has couple of benefits. First, it is applicable to many neural networks in communication theory: it is not problem-specific. Second, the handcrafted encoder can be treated as features extracted from another (virtual) deep neural network and combined with the target neural network. This means that a machine learning pipeline can be formed by stacking these two trained deep neural networks instead of stacking multiple layers. Although it is not known how to optimally combine the pretrained layers \cite{BengioVincent13}, it is much easier to combine two separate deep neural networks. Additionally, our model isolates the impact of optimization due to the one-bit quantization. This leads to a better understanding of the influence of regularization.

In deep neural networks, training the lower layers has the key role of determining the generalization capability \cite{Bengio09}. In our model, the lower layers can be seen as layers of a virtual deep neural network that can learn the state-of-the-art coding method. The middle layers are the encoder part of the autoencoder, which are the most problematic in terms of training ( due to one-bit quantization) and the higher layers are the decoder of the autoencoder. We find that even if the middle layers are suboptimally trained, the overall architecture performs well. That is, we claim that as long as the middle layers contribute to hierarchical learning, it is not important to optimally train their parameters. This brings significant complexity savings in training neural networks, but more work is needed to verify this claim more broadly.

\section{Numerical Results} \label{Numerical Results}
To determine the efficiency of the two-step training policy in the proposed autoencoder, how well the encoder parameters can be trained according to the decoder parameters is first shown. In what follows, the bit error rate (BER) of the proposed coding scheme is evaluated for QPSK and 16-QAM modulation under the constraint of one-bit quantization in the receivers for AWGN channels. In the simulations, a root raised cosine (RRC) filter with excess bandwidth $\alpha$ is considered as a pulse shape and $G$-fold time domain oversampling is utilized. Furthermore, $K$ is taken as $20$ without an extensive hyper-parameter search. We use the turbo code that is utilized in LTE \cite{LTERelease14}, which has a code rate of $\frac{1}{3}$ and a block length of $6144$. The codewords formed with this turbo code are processed with a subblock of length $64$ with the autoencoder. The proposed coding scheme is directly compared with this conventional turbo code in case of both unquantized (soft decision decoding) and one-bit quantized (hard decision decoding) samples. This is to explicitly show why an autoencoder is needed for one-bit quantization.

To observe how efficiently the encoder can be trained for the aforementioned training policy, its mean square error (MSE) loss function is plotted with respect to the number of epochs in Fig. \ref{fig:precoder_error}. As can be seen, the error goes to almost zero after a few hundred epochs. %This means that the encoder parameters can be trained in compatible with the decoder parameters.
\begin{figure} [!t] 
\centering 
\includegraphics [width=5in]{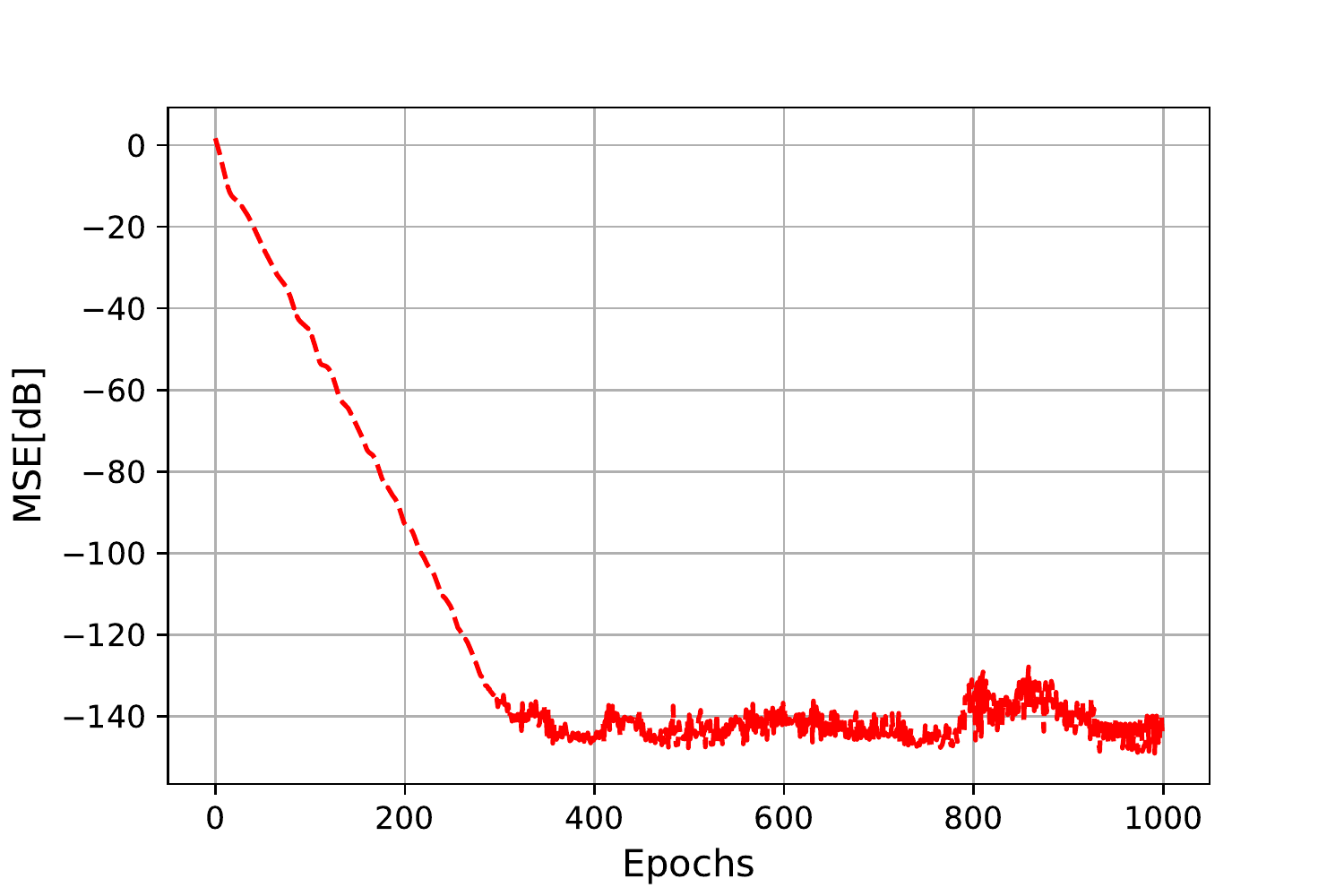}
\caption{The mean square error of the encoder parameters in the two-step training policy.}\label{fig:precoder_error}
\end{figure} 
One of the important observations in training the encoder is the behavior of the neural layer in the transmitter, which is the first layer in Fig. \ref{fig:one-bit-ae-b}. To be more precise, this layer demonstrates that nearly half of its hidden units (or neurons) become zero. This is due to the ReLU activation function and enables us to pack the symbols more intelligently. More precisely, the input of the autoencoder has $N$ units, and thus the dimension of the first hidden layer is $GN$, but only $GN/2$ of them have non-zero terms. Interestingly, the hidden units of this layer, which also correspond to the transmitted symbols, have quite different power levels from each other. To visualize this, when the all ones codeword with length $N = 64$ is given to the input of the autoencoder, the output of the first hidden layer for $G=2$ becomes as in Fig. \ref{fig:precoder_constellation}. According to that, $64$ neurons out of $128$ neurons become zero. Our empirical results also show that this is  independent of the value of $G$.
\begin{figure} [!t] 
\centering 
\includegraphics [width=5in]{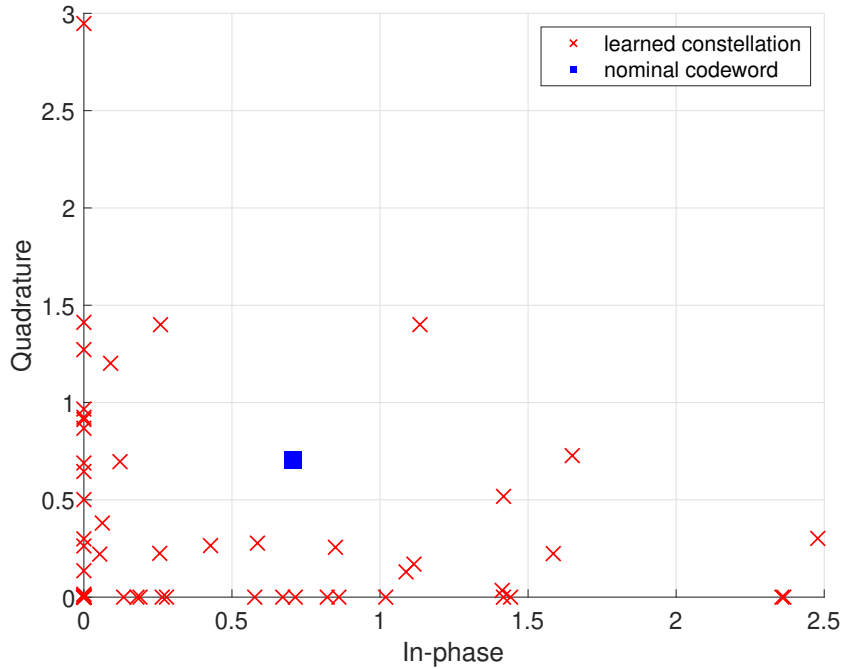}
\caption{The constellation of the transmitted signal learned from the all ones codeword. Note that all the points would be at $(0.707, 0.707)$ if there was not a precoder.}\label{fig:precoder_constellation}
\end{figure} 

%As in most coding methods, the coding rate plays a key role for the efficiency of coding scheme. 
%Although the coding rate is $\frac{1}{G}$ In the proposed autoencoder , which means that the higher the value of $G$ the better the error performance is.
In the proposed coding scheme, the symbols are transmitted faster with period $T/G$, however this does not affect the transmission bandwidth, i.e., the bandwidth remains the same \cite{LiverisCostas03}\footnote{Note that the complexity increase in the receiver due to faster-than-Nyquist transmission is not an issue for autoencoders, in which the equalizer is implemented as a neural network independent of transmission rate.}. %which means that the transmission power is increased by $G$
Although the coding rate is $\frac{1}{G}$ in the proposed autoencoder, this does not mean that there is a trivial coding gain increase, because the bandwidth remains the same, and thus the minimum distance (or free distance) does not increase\footnote{In convolutional codes, the coding gain is smaller than or equal to $10\log_{10}({\rm coding\ rate} * {\rm minimum\ distance})$ \cite{Proakis}.}. The minimum distance can even decrease despite smaller coding rate, because dividing the same subspace into $G$ fold more partitions can decrease the distance between neighboring partitions.

To make a fair comparison between conventional turbo codes designed for orthogonal transmission with symbol period $T$ and our autoencoder-aided coding scheme, our methodology in the simulations is to consider the transmission rate increase as a bandwidth increase. In this manner, we first determine the maximum possible value of $G$ that corresponds to the available baseband bandwidth $(1+\alpha)/2T$. Then we add an SNR penalty by shifting the BER curve to the right, if the transmission rate increase exceeds the available bandwidth. To illustrate, if $\alpha$ is $1$, $G$ can be $4$ (because the non-orthogonal transmission symbol period becomes $T/2$ due to the precoder behavior) without needing to add an SNR penalty. However, if $G$ becomes $8$, the BER curve has to be shifted 3 dB to the right even if there is full excess bandwidth. This in fact explains what exploiting the excess bandwidth and packing the symbols more intelligently correspond to. %Of course, if the unused portion of the bandwidth is exploited that are allocated for guard, pilots and synchronization symbols, there is no need to add a power penalty.

The performance of the autoencoder-based concatenated code is given in Fig. \ref{fig:AEenpowered} for QPSK modulation. Specifically, when the turbo codes are decoded with $2$ iterations, the BER becomes as in Fig. \ref{fig:2iter}. Here, the proposed coding method can give very close performance to the turbo code that works with unquantized samples despite one-bit ADCs for $\alpha=1$. Although there is some performance loss for $\alpha=0.5$, our method can still outperform the turbo code that is optimized for one-bit samples. When $5$ iterations are employed for the turbo decoding, the gap due to the excess bandwidth increases a little as can be observed in Fig. \ref{fig:5iter}. Notice that our empirical results match with the derived expression in Lemma \ref{Lemma2}, which states the minimum required SNR for reliable communication. This also proves that the proposed suboptimum training policy can approach the performance of an autoencoder that is trained perfectly.
\begin{figure*}[!t]
\centering
\subfigure[2 iterations are used for turbo decoding.]{
\label{fig:2iter}
\includegraphics[width=5in]{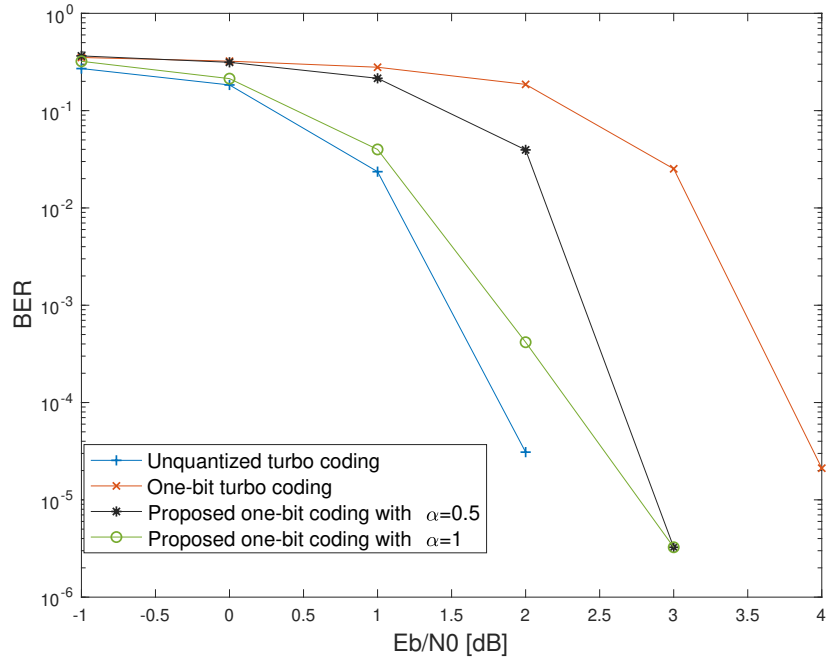}}
\qquad
\subfigure[5 iterations are used for turbo decoding.]{
\label{fig:5iter}
\includegraphics[width=5in]{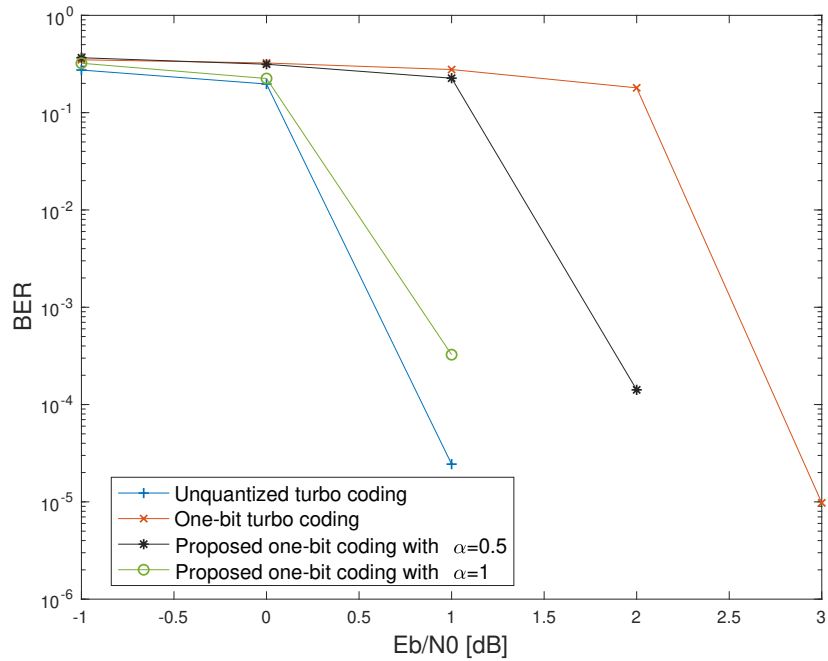}}
\caption{BER comparison of the proposed coding with the turbo coding that works with unquantized and one-bit quantized samples for QPSK modulation.}\label{fig:AEenpowered}
\end{figure*}

One-bit ADCs can work reasonably well in practice for QPSK modulation. However, this is not the case for higher order modulation, in which it is much more challenging to have a satisfactory performance with one-bit ADCs. To specify the benefit of the proposed coding scheme for higher order modulation, the simulation is repeated for $16$-QAM as depicted in Fig. \ref{fig:ber_vs_EbN0_5iter_16QAM}. As can be observed, the conventional turbo code is not sufficient for one-bit ADCs in case of $16$-QAM. On the other hand, the proposed coding method can give a similar waterfall slope with a nearly fixed SNR loss with respect to the turbo code that processes ideal unquantized samples. This result can be explained with Theorem \ref{Theorem3}. More precisely, in case of higher order modulations the nonlinearity stemming from one-bit ADCs considerably increases and deviates the AWGN channel to other non-Gaussian distributions. However, the inner code or the autoencoder produces a Gaussian process for the turbo decoder even if there is a high nonlinearity.
\begin{figure} [!h] 
\centering 
\includegraphics [width=4.75in]{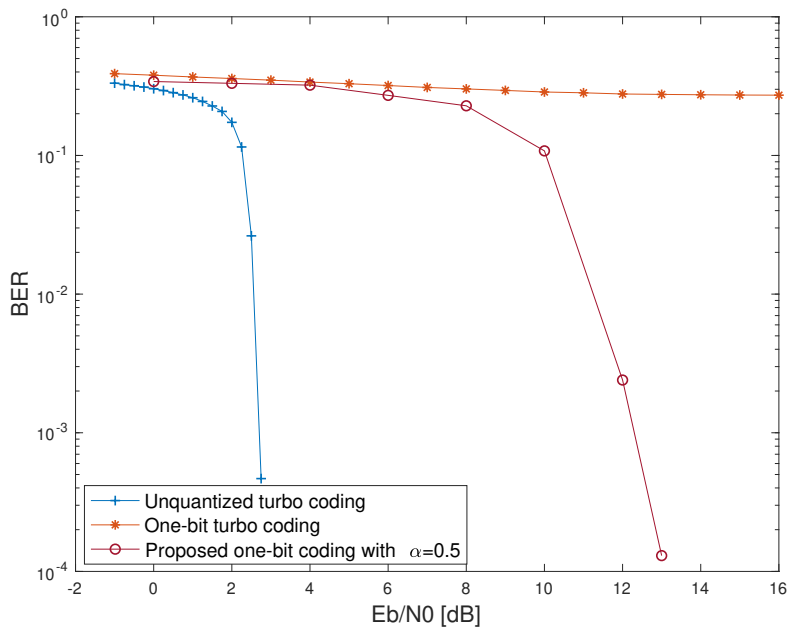}
\caption{BER performance for $16$-QAM.}\label{fig:ber_vs_EbN0_5iter_16QAM}
\end{figure} 

\section{Conclusions} \label{Conclusions}
In this paper, the development of hand-crafted channel codes for one-bit quantization is transformed into learning the parameters of a specially designed autoencoder. Despite its theoretical appeal, learning or training the parameters of an autoencoder is often very challenging. Hence, suboptimum training methods are needed that can lead to some performance loss. To compensate for this loss, we propose to use a state-of-the-art coding technique, which were developed according to the AWGN channel, as an implicit regularizer for autoencoders that are trained suboptimally. This idea is applied to design channel codes for AWGN channels under the constraint of one-bit quantization in receivers. Our results show that the proposed coding technique outperforms conventional turbo codes for one-bit quantization and can give performance close to unquantized turbo coding by packing the signal intelligently and exploiting the excess bandwidth. The superiority of the proposed coding scheme is more profound for higher order modulation in which one-bit ADCs are not previously viable even with powerful turbo codes. As future work, the idea of this hybrid code design can be extended to other challenging environments such as one-bit quantization for fading channels and high-dimensional MIMO channels. Additionally, it would be interesting to compensate for the performance loss observed in short block lengths for turbo, LDPC and polar codes with deep learning aided methods.

\appendices
\section{Proof of Theorem \ref{Theorem1}} \label{appendix-A}
In communication theory, solving (\ref{comm_theory}) for a given $n$, $M$ and SNR leads to the minimum probability of error, which can be achieved through maximum likelihood detection\footnote{We assume equal transmission probability of each message.}. Hence,
\begin{equation} \label{comm_enc_dec}
\epsilon_{ml}(n,M) =  \underset{f,g}{\mathrm{min\ }} \epsilon(n,M).
\end{equation}
It is straightforward to express  
\begin{equation}\label{comm_error}
\epsilon_{ml}(n,M)=\epsilon(n,M)
\end{equation} 
when $f=f^*$ and $g=g^*$. We need to prove that minimizing the loss function in (\ref{obj_fnc_ae}) while solving (\ref{deep_learning}) gives these same $f^*$ and $g^*$, i.e., $f^*=f_{AE}$ and $g^*=g_{AE}$.

Since the error probability is calculated message-wise instead of bit-wise in \eqref{error}, the $k$-dimensional binary valued input training vector $\textbf{s}$ is first encoded as a $2^k$-dimensional one-hot vector $\textbf{x}$ to form the messages, which is to say that $M=2^k$. Also, a softmax activation function is used to translate the entries of the output vector $\hat{\textbf{x}}$ into probabilities. With these definitions, the cross entropy function is employed to train the parameters\footnote{Here, we omit the subscript $c$ that represents the $c^{th}$ training sample for brevity.} 
\begin{equation}\label{cross_ent}
L(\textbf{x}, \hat{\textbf{x}}) = -\sum_{l=1}^{2^k}q[\hat{x}_l|\textbf{x}]\log(p[\hat{x}_l|\textbf{x}])
\end{equation}
where $q[\cdot | \cdot]$ is the empirical conditional probability distribution and $p[\cdot | \cdot]$ is the predicted conditional probability distribution (or the output of the neural network). 

Each output vector is assigned to only one of discrete $2^k$ classes, and hence the decision surfaces are $(2^k-1)$-dimensional hyperplanes for the $2^k$-dimensional input space. That is,
\begin{align} \label{hyperplanes}
    q[(\hat{x}_l)|\textbf{x}]= 
    \begin{cases}
       1 &  \text{$\textbf{x} \in \hat{x}_l$}\\
       0 & \text{o.w.}
    \end{cases} 
\end{align}
Substituting (\ref{hyperplanes}) in (\ref{cross_ent}) implies that
\begin{equation}\label{MAP_loss}
L(\textbf{x}, \hat{\textbf{x}}) = -\log(p[\hat{\textbf{x}}|\textbf{x}]). 
\end{equation}
It is straightforward to express that \eqref{MAP_loss} is minimized when $\mathcal{P}[\hat{\textbf{x}} = \textbf{x} |\textbf{x}]$ is maximized (or equivalently $\mathcal{P}[\hat{\textbf{x}} \neq \textbf{x} |\textbf{x}]$ is minimized). Since $\hat{\textbf{x}} = g(Y^n)$ and $\textbf{x}=i$,
\begin{equation}
\min \mathcal{P}[\hat{\textbf{x}}\neq \textbf{x} | \textbf{x}] = \epsilon_{ml}(n,M)
\end{equation}
due to \eqref{error} and \eqref{comm_enc_dec},  and is the case when $f=f^*$ and $g=g^*$ because of \eqref{comm_error}. This implies that 
\begin{equation}
(f^*,g^*) =   \underset{f,g}{\mathrm{arg\ min\ }} \mathcal{P}[\hat{\textbf{x}}\neq \textbf{x} | \textbf{x}] = \underset{f,g}{\mathrm{arg\ min\ }}L(\textbf{x}, \hat{\textbf{x}})
\end{equation}
By definition,
\begin{equation}\label{deep_learning_enc_dec}
(f_{AE},g_{AE}) = \underset{f,g}{\mathrm{arg\ min\ }} L(\textbf{x}, \hat{\textbf{x}}),
\end{equation}
and hence,
\begin{equation}\label{deep_learning_enc_dec}
(f^*,g^*) = (f_{AE},g_{AE}),
\end{equation}
which completes the proof due to the one-to-one mapping between $\textbf{s}$ and $\textbf{x}$, and $\hat{\textbf{s}}$ and $\hat{\textbf{x}}$.

\section{Proof of Theorem \ref{Theorem2}} \label{appendix-B}
The encoder parameters $\theta_f$ are initialized with zero-mean, unit variance Gaussian random variables in the one-bit quantized AWGN channel autoencoder. Hence, the mutual information is found over these random weights as
\begin{equation}
I(\textbf{s};\textbf{r}) = \mathbb{E}_{\theta_f}[I(\textbf{s};\textbf{r}|\theta_f)].
\end{equation}
By the definition of mutual information, 
\begin{equation}
\begin{split}
I(\textbf{s};\textbf{r}) &= \mathbb{E}_{\theta_f}[H(\textbf{r}|\theta_f)-H(\textbf{r}|\textbf{s},\theta_f)] \\
& =\mathbb{E}_{\theta_f}\left[\sum_{i=1}^nH(r_i|r_1, \cdots, r_{i-1},\theta_f)-H(r_i|r_1, \cdots, r_{i-1},\textbf{s},\theta_f)\right]. \\   
\end{split}
\end{equation}
The entries of the random matrix $\theta_f$ are i.i.d, and the noise samples are independent. This implies that the $r_i$'s are independent, i.e., 
\begin{equation}
I(\textbf{s};\textbf{r}) = n\mathbb{E}_{\theta_f}[H(r_i|\theta_f)-H(r_i|\textbf{s},\theta_f)]
\end{equation}
Since $r_i$ can be either $+1$ or $-1$ due to the one-bit quantization, $H(r_i)\leq 1$, which means 
\begin{equation}
\begin{split}
I(\textbf{s};\textbf{r}) &\leq n\mathbb{E}_{\theta_f}[1-H(r_i|\textbf{s},\theta_f)] \\
& \leq n\mathbb{E}_{\theta_f}\left[1+\sum_{\textbf{s}}\sum_{r_i}p[\textbf{s}, r_i|\theta_f]\text{log}(p[r_i|\textbf{s},\theta_f])\right] \\
& \leq n\mathbb{E}_{\theta_f}\left[1+\sum_{\textbf{s}}\sum_{r_i}p[r_i|\textbf{s},\theta_f]p[\textbf{s}]\text{log}(p[r_i|\textbf{s},\theta_f])\right]. \\
\end{split}
\end{equation}
Due to the one-to-one mapping between $\textbf{s}$ and $\textbf{x}$,
\begin{equation} \label{Bayes}
I(\textbf{x};\textbf{r}) \leq n\mathbb{E}_{\theta_f}\left[1+\sum_{\textbf{x}}\sum_{r_i}p[r_i|\textbf{x},\theta_f]p[\textbf{x}]\text{log}(p[r_i|\textbf{x},\theta_f])\right].
\end{equation}
Notice that for all $\textbf{x}$, only one of its elements is $1$, the rest are $0$. This observation reduces (\ref{Bayes}) to
\begin{equation}
I(\textbf{x};\textbf{r})  \leq n\mathbb{E}_{\theta_f}\left[1+\sum_{r_i}p[r_i|\textbf{x}=x,\theta_f]\text{log}(p[r_i|\textbf{x}=x,\theta_f])\right]
\end{equation}
where $x$ is one realization of $\textbf{x}$. Then, the total probability law gives
\begin{eqnarray}
I(\textbf{x};\textbf{r}) & \leq & \mathbb{E}_{\theta_f}[1 +p[r_i=+1|\textbf{x}=x,\theta_f]\text{log}(p[r_i=+1|\textbf{x}=x,\theta_f])\\ \nonumber
&& + p[r_i=-1|\textbf{x}=x,\theta_f]\text{log}(p[r_i=-1|\textbf{x}=x,\theta_f])].
\end{eqnarray}
Since
\begin{equation}
\begin{split}
p[r_i=+1|\textbf{x} = x,\theta_f] &= p[y_i\geq0|\textbf{x}=x,\theta_f]=Q\left(\theta_f\sqrt{\gamma}\right)\\
& =p[y_i<0|\textbf{x}=x,\theta_f]=1-Q\left(\theta_f\sqrt{\gamma}\right), \\   
\end{split}
\end{equation}
this completes the proof.

\section{Proof of Theorem \ref{Theorem3}} \label{appendix-C}
The autoencoder architecture, which is composed of 6 layers as illustrated in Fig. \ref{fig:one-bit-ae-a}, can be expressed layer-by-layer as
\begin{equation}
\begin{split}
l_0: & z^{(0)} = s, x^{(1)}  = \phi_0(z^{(0)}) = s \\   
l_1: & z^{(1)} = \theta^{(1)}x^{(1)} + b^{(1)}, x^{(2)}  = \mathcal{Q}(\phi_1(z^{(1)}) + n^{(1)}) \\   
l_2: & z^{(2)} = \theta^{(2)}x^{(2)} + b^{(2)}, x^{(3)}  = \phi_2(z^{(2)}) \\  
l_3: & z^{(3)} = \theta^{(3)}x^{(3)} + b^{(3)}, x^{(4)}  = \phi_3(z^{(3)}) \\  
l_4: & z^{(4)} = \theta^{(4)}x^{(4)} + b^{(4)}, x^{(5)}  = \phi_4(z^{(4)}) \\  
l_5: & z^{(5)} = \theta^{(5)}x^{(5)} + b^{(5)} \\
\end{split}
\end{equation}
where $\theta^{(l)}$ are the weights and $b^{(l)}$ is the bias. All the weights and biases are initialized with Gaussian random variables with variances $\sigma_\theta^2$ and $\sigma_b^2$, respectively, as is standard practice \cite{GlorotBengio10}, \cite{HeSun15}. Thus, $z_i^{(l)}|x^{(l)}$ is an identical and independent Gaussian process for every $i$ (or unit) with zero mean and covariance 
\begin{equation} \label{cov_l}
K^{(l)}(z,\hat{z}) = \sigma_b^2 + \sigma_\theta^2 \mathbb{E}_{z_i^{(l-1)} \sim \mathcal{N}(0,\,K^{(l-1)}(z,\hat{z}))}[\sigma_{l-1}(\phi(z_i^{(l-1)}))\sigma_{l-1}(\phi(\hat{z}_i^{(l-1)}))]
\end{equation}
where $\sigma_{l-1}(\cdot)$ is an identity function except for $l=2$, in which $\sigma_{1}(\cdot) = \mathcal{Q}(\cdot)$. As the width goes to infinity, \eqref{cov_l} can be written in integral form as 
\begin{equation} \label{cov_int}
\begin{split}
\underset{ {n^{(l-1)}\rightarrow \infty}}{\text{lim}}K^{(l)}(z,\hat{z}) = &\int\int \sigma_{l-1}(\phi_{l-1}(z_i^{(l-1)}))\sigma_{l-1}(\phi_{l-1}(\hat{z}_i^{(l-1)}))\\
&\mathcal{N}\left(z,\hat{z};0,\alpha_\theta^2\left[
  \begin{array}{cc}
  K^{(l-1)}(z,z) & K^{(l-1)}(z,\hat{z})  \\
  K^{(l-1)}(\hat{z},z) & K^{(l-1)}(\hat{z},\hat{z})   
  \end{array}
\right] + \alpha_b^2 \right)dzd\hat{z}. \\
\end{split}
\end{equation}
To be more compact, the double integral in \eqref{cov_int} can be represented with a function such that 
\begin{equation}
\underset{ {n^{(l-1)}\rightarrow \infty}}{\text{lim}}K^{(l)}(z,\hat{z}) = F_{l-1}(K^{(l-1)}(z,\hat{z})).
\end{equation}
Hence, $z^{(5)}|s$ is a Gaussian process with zero mean and covariance 
\begin{equation}
K^{(5)}(z,\hat{z}) = F_4(\cdots (F_1(K^{(1)}(z,\hat{z}))))
\end{equation}
when $\min(n_1, \cdots, n_5) \rightarrow \infty$, i.e., the output of the autoencoder yields Gaussian distributed data in the initialization phase.

During training, the parameters are iteratively updated as
\begin{equation}
\Theta_n = \Theta_{n-1}-\eta \nabla_{\Theta_{n-1}} L(\Theta_{n-1})
\end{equation}
where $\Theta_n = \{\theta_n^{(1)}, \cdots, \theta_n^{(5)}, b^{(1)}, \cdots, b^{(5)} \}$, and $L(\cdot)$ is the loss function. In parallel, the output $z^{(5)}$ is updated as 
\begin{equation} \label{out_upd}
z^{(5)}_n = z^{(5)}_{n-1} + \nabla_{\Theta_{n-1}} (z^{(5)}_{n-1})(\Theta_n - \Theta_{n-1}).
\end{equation}
The gradient term in \eqref{out_upd} is a nonlinear function of the parameters. Nevertheless, it was recently proven in \cite{JaehoonLee19} that as the width goes to infinity, this nonlinear term can be linearized via a first-order Taylor expansion. More precisely,
\begin{equation} \label{out_ae}
z^{(5)}_n = z^{(5)}_{0} + \nabla_{\Theta_{0}}(z^{(5)}_{0})(\Theta_{n}-\Theta_0) + \mathcal{O}((\min(n_1, \cdots, n_5)^{-0.5} )
\end{equation}
where the output at the initialization or $z^{(5)}_{0}$ is Gaussian as discussed above. Since the gradient (and hence the Jacobian matrix) is a linear operator, and a linear operation on a Gaussian process results in a Gaussian process, the output of the autoencoder for a given input (or $z^{(5)}_n|s$) is a Gaussian process throughout training with gradient descent.


\begin{thebibliography}{1} 

\bibitem{TedInvitedPaper19}
T. S. Rappaport, Y. Xing, O. Kanhere, S. Ju, A. Madanayake, S. Mandal, A. Alkhateeb, and G. C. Trichopoulos, “Wireless
communications and applications above 100 Ghz: Opportunities and Challenges for 6G and beyond,” IEEE Access, 2019.

\bibitem{Walden99}
R. Walden, 
``Analog-to-digital converter survey and analysis", 
\emph{IEEE J. Sel. Areas Commun.}, vol. 17, no. 4, pp. 539-550, April 1999.

\bibitem{OShea17}
T. O'Shea and J. Hoydis, 
``An introduction to deep learning for the physical layer",
\emph{IEEE Trans. on Cogn. Commun. Netw.}, vol. 3, no. 4, pp. 563-575, December 2017.

\bibitem{YeLi18}
H. Ye, G. Y. Li, and B.-H. Juang, 
``Power of deep learning for channel estimation and signal detection in OFDM systems",
\emph{IEEE Wireless Communications Letters}, vol. 7, pp. 114-117, February 2011.

\bibitem{BalAnd19}
E. Balevi and J. G. Andrews, 
``One-bit OFDM receivers via deep learning",
\emph{IEEE Trans. on Communications}, Doi:10.1109/TCOMM.2019.2903811, 2019.

\bibitem{MaoHao18}
Q. Mao, F. Hu, and Q. Hao,
``Deep learning for intelligent wireless networks: A comprehensive survey",
\emph{IEEE Communications Surveys Tutorials}, vol. 20, no. 4, pp. 2595-2621, November 2018.

\bibitem{BruckBlaum89}.
J. Bruck and M. Blaum, 
``Neural networks, error-correcting codes, and polynomials over the binary n-cube", 
\emph{IEEE Trans. Inform. Theory}, vol. 35, no. 5, pp. 976-987, September 1989.

\bibitem{WangWicker96}
X.-A. Wang and S. B. Wicker, 
``An artificial neural net Viterbi decoder", 
\emph{IEEE Transactions on Communications}, vol. 44, no. 2, pp. 165-171, February 1996.

\bibitem{HamalainenHenriksson99}
A. Hamalainen and J. Henriksson, 
``A recurrent neural decoder for convolutional codes", 
\emph{in IEEE ICC}, vol. 2, no. 99CH36311, pp. 1305-1309, June 1999.

\bibitem{Gruber17}
T. Gruber, S. Cammerer, J. Hoydis, and S. T. Brink, 
``On deep learning based channel decoding", 
\emph{in Proc. Conf. Inf. Sci. Syst.}, pp. 1-6, March 2017.

\bibitem{KimViswanath18} 
H. Kim, Y. Jiang, R. Rana, S. Kannan, S. Oh, and P. Viswanath, 
``Communication algorithms via deep learning", 
\emph{in Proc ICLR}, April 2018.

\bibitem{Nachmani18}
E. Nachmani, E. Marciano, L. Lugosch, W. J. Gross, D. Burshtein, and Y. Be'ery, 
``Deep learning methods for improved decoding of linear codes",
\emph{IEEE Journal of Selected Topics in Signal Processing}, vol. 12, no. 1, pp. 119-131, February 2018.

\bibitem{KimViswanath18b}
H. Kim, Y. Jiang, S. Kannan, S. Oh, and P. Viswanath, 
``Deepcode: Feedback codes via deep learning",
\emph{arXiv preprint arXiv:1807.00801}, July 2018.

\bibitem{JiangViswanath18}
Y. Jiang, H. Kim, H. Asnani, S. Kannan, S. Oh, and P. Viswanath, 
``LEARN codes: Inventing low-latency codes via recurrent neural networks",
\emph{arXiv preprint arXiv:1811.12707}, November 2018.

\bibitem{Kosaian18}
J. Kosaian, K. Rashmi, and S. Venkataraman, 
``Learning a code: Machine learning for approximate non-linear coded computation",
\emph{arXiv preprint arXiv:1806.01259}, April 2018.

\bibitem{IvrlacNossek07}
M. T. Ivrlac and J. A. Nossek, 
``On MIMO channel estimation with single-bit signal-quantization",
\emph{in Proc. ITG Workshop Smart Antennas}, February 2007.

\bibitem{JacobssonStuder17}
S. Jacobsson, G. Durisi, M. Coldrey, U. Gustavsson, and C. Studer,
``Throughput analysis of massive MIMO uplink with low resolution ADCs",
\emph{IEEE Trans. Wireless Comm.}, vol. 16, pp. 4938-4051, June 2017.

\bibitem{StuderDurisi16}
C. Studer and G. Durisi, 
``Quantized massive MU-MIMO-OFDM uplink",
\emph{IEEE Trans. Commun.}, vol. 64, no. 6, pp. 2387-2399, June 2016.

\bibitem{RisiLarsson14}
C. Risi, D. Persson, and E. G. Larsson, 
``Massive MIMO with 1-bit ADC", 
[Online]. Available: https://arxiv.org/abs/1404.7736, April 2014.

\bibitem{VincentManzagol08}
P. Vincent, H. Larochelle, Y. Bengio, and P.-A. Manzagol, 
``Extracting and composing robust features with denoising autoencoders",
\emph{in Proc. ICML}, July 2008.

\bibitem{RanzatoLeCun06}
M. Ranzato, C. Poultney, S. Chopra, and Y. LeCun, 
``Efficient learning of sparse representations with an energy-based mode",
\emph{in Proc. NIPS}, December 2006.

\bibitem{RifaiBengio11}
S. Rifai, P. Vincent, X. Muller, X. Glorot, and Y. Bengio, 
``Contractive auto-encoders: Explicit invariance during feature extraction",
\emph{in Proc. ICML}, July 2011.

\bibitem{HintonTeh06}
G. E. Hinton, S. Osindero, and Y. Teh, 
``A fast learning algorithm for deep belief nets", 
\emph{Neural Computation}, vol. 18, no. 7, pp. 1527-1554, July 2006.

\bibitem{GlorotBengio10}
X. Glorot and Y. Bengio, 
``Understanding the difficulty of training deep feedforward neural networks", 
\emph{in Proc. NIPS}, May 2010.

\bibitem{HeSun15}.
K. He, X. Zhang, S. Ren, and J. Sun, 
``Delving deep into rectifiers: Surpassing human-level performance on imagenet classification",
\emph{in ICCV}, December 2015.

\bibitem{IoffeSzegedy15}
S. Ioffe and C. Szegedy, 
``Batch normalization: Accelerating deep network training by reducing internal covariate shift", 
\emph{in ICML}, July 2015.

\bibitem{Bengio09}
Y. Bengio, 
``Learning deep architectures for AI", 
\textit{Foundations and Trends in Machine Learning}, vol. 2, no. 1, pp. 1-127, August 2009.

\bibitem{BengioVincent13}
Y. Bengio, A. Courville, and P. Vincent, 
``Representation learning: a review and new perspectives", 
\emph{IEEE Trans. Pattern Anal. Machine Intelligence}, vol. 35, no. 8, pp. 1798-1828, August 2013.

\bibitem{LTERelease14}
3GPP TS 36.212, 
``Evolved universal terrestrial radio access (E-UTRA) - multiplexing and channel coding (rel 14),” 
in 3GPP FTP Server, 2017.

\bibitem{LiverisCostas03}
A. D. Liveris and C. N. Georghiades, 
``Exploiting faster-than-Nyquist signaling", 
\emph{IEEE Trans. on Communications}, vol. 51, no. 9, pp. 1502-1511, September 2003.

\bibitem{Proakis}
J. G. Proakis and M. Salehi,
\textit{Digital Communications}. McGraw-Hill Higher Education, 2005.

\bibitem{JaehoonLee19}
J. Lee, L. Xiao, S. Schoenholz, Y. Bahri, J. Sohl-Dickstein, and J. Pennington, 
“Wide neural networks of any depth evolve as linear models under gradient descent,” 
\emph{arXiv preprint arXiv:1902.06720}, 2019.

\end{thebibliography}
\end{document}